\DeclarePairedDelimiter\abs{\lvert}{\rvert}%
\newcommand{\cal}[1]{\mathcal{#1}}
\newcommand{\cP}{\cal P}
\newcommand{\la}{\langle}
\newcommand{\ra}{\rangle}
\newcommand{\1}{\mathbbm{1}}
\newcommand{\II}{{\mathfrak{I}}}
\newcommand{\ev}{\operatorname{ev}}
\newcommand{\F}{{\mathbb F}}
\newcommand{\N}{{\mathbb N}}
\newcommand{\zn}{{\mathbb{Z}_{n}}}
\newcommand{\hzn}{{\hat{\mathbb{Z}}_{n}}}
\numberwithin{equation}{section}
\newtheorem{theorem}{Theorem}[section]
\newtheorem{lemma}[theorem]{Lemma}
\newtheorem{proposition}[theorem]{Proposition}
\newtheorem{corollary}[theorem]{Corollary}
\theoremstyle{definition}
\newtheorem{definition}[theorem]{Definition} 
\newtheorem{remark}[theorem]{Remark}
\newtheorem{example}[theorem]{Example}
\newcommand{\rmv}[1]{}
\DeclareMathOperator{\wt}{wt}
\DeclareMathOperator{\supp}{supp}
\DeclareMathOperator{\RM}{RM}
\DeclareMathOperator{\Amp}{Amp}
\newcommand{\rl}[1]{Lemma~\ref{L:#1}}
\newcommand{\rp}[1]{Proposition~\ref{P:#1}}
\newcommand{\rc}[1]{Corollary~\ref{C:#1}}
\newcommand{\rt}[1] {Theorem~\ref{T:#1}}
\theoremstyle{empty}
\begin{document}


\title[An algebraic characterization of binary CSS-T codes and cyclic CSS-T codes]{An algebraic characterization of binary CSS-T codes and cyclic CSS-T codes for quantum fault tolerance}

\author[E. Camps-Moreno]{Eduardo Camps-Moreno}
\address[Eduardo Camps-Moreno]{Department of Mathematics\\ Virginia Tech\\ Blacksburg, VA USA}
\email{eduardoc@vt.edu}

\author[H. L\'opez]{Hiram H. L\'opez}
\address[Hiram H. L\'opez]{Department of Mathematics\\ Virginia Tech\\ Blacksburg, VA USA}
\email{hhlopez@vt.edu}

\author[G. L. Matthews]{Gretchen L. Matthews}
\address[Gretchen L. Matthews]{Department of Mathematics\\ Virginia Tech\\ Blacksburg, VA USA}
\email{gmatthews@vt.edu}

\author[D. Ruano]{Diego Ruano}
\address[Diego Ruano]{IMUVA-Mathematics Research Institute\\ Universidad de Valladolid\\ Valladolid, Spain}
\email{diego.ruano@uva.es}

\author[R. San-Jos\'e]{Rodrigo San-Jos\'e}
\address[Rodrigo San-Jos\'e]{IMUVA-Mathematics Research Institute\\ Universidad de Valladolid\\ Valladolid, Spain}
\email{rodrigo.san-jose@uva.es}

\author[I. Soprunov]{Ivan Soprunov}
\address[Ivan Soprunov]{Department of Mathematics and Statistics\\ Cleveland State University\\ Cleveland, OH USA}
\email{i.soprunov@csuohio.edu}

\thanks{Hiram H. L\'opez was partially supported by the NSF grants DMS-2201094 and DMS-2401558.
Gretchen L. Matthews was partially supported by NSF DMS-2201075 and the Commonwealth Cyber Initiative.
Diego Ruano and Rodrigo San-Jos\'e were partially supported by Grant TED2021-130358B-I00 funded by MICIU/AEI/ 10.13039/501100011033 and by the ``European Union NextGenerationEU/PRTR'', by Grant PID2022-138906NB-C21 funded by MICIU/AEI/ 10.13039/501100011033 and by ERDF/EU, and by Grant QCAYLE supported by the European Union.-Next Generation UE/MICIU/PRTR/JCyL. Rodrigo San-Jos\'e was also partially supported by Grants FPU20/01311 and EST23/00777 funded by the Spanish Ministry of Universities.}
\keywords{CSS-T construction; Schur product of linear codes; Cyclic codes; Quantum codes}
\subjclass[2010]{94B05;  81P70;  11T71; 14G50}

\begin{abstract}
CSS-T codes were recently introduced as quantum error-correcting codes that respect a transversal gate. A CSS-T code depends on a CSS-T pair, which is a pair of binary codes $(C_1, C_2)$ such that $C_1$ contains $C_2$, $C_2$ is even, and the shortening of the dual of $C_1$ with respect to the support of each codeword of $C_2$ is self-dual. In this paper, we give new conditions to guarantee that a pair of binary codes $(C_1, C_2)$ is a CSS-T pair. We define the poset of CSS-T pairs and determine the minimal and maximal elements of the poset. We provide a propagation rule for nondegenerate CSS-T codes. We apply some main results to Reed-Muller, cyclic, and extended cyclic codes. We characterize CSS-T pairs of cyclic codes in terms of the defining cyclotomic cosets. We find cyclic and extended cyclic codes to obtain quantum codes with better parameters than those in the literature.
\end{abstract}

\maketitle

\section{Introduction} \label{S:intro}
The development of large-scale, reliable quantum computing relies on quantum error correction to guard against the adverse impact of noise and decoherence. Quantum error-correcting codes were first discovered by Shor in 1995 \cite{Shor_95}. Soon after that, independent works by Calderbank and Shor \cite{CalderbankShor_96} and Steane \cite{Steane_96} outlined how classical linear codes could be used to construct quantum error-correcting codes, now referred to as CSS codes. The CSS construction uses a pair $(C_1, C_2)$ of classical linear codes, where the code $C_1$ contains the code $C_2$, to define a quantum stabilizer code. CSS codes are advantageous because they 
allow one to combine two appropriate classical codes into
a quantum stabilizer code. 
\rmv{
correct bit-flip and phase-shift errors independently, allowing them to protect against these errors separately, unlike more general stabilizer codes.} CSS codes have some nice properties, including propagation rules (see \cite{calderbankp,grasslp,rainsp} and the survey \cite{grasslsurvey}).

While generally not optimal, CSS codes are optimal among nondegenerate stabilizer codes that support the transversal $T$ gate; indeed it is demonstrated in \cite{rengaswamyOptimalityCSST} that for any non-degenerate stabilizer code that supports a physical transversal T gate, there is a CSS code with the same parameters that also does. CSS-T codes, introduced in \cite{calderbankclassicalcsst}, are motivated by the need for quantum codes which respect the transversal $T$ gate. Transversal gates are essential in fault-tolerant quantum computation as they mitigate the proliferation of errors. Transversals may be considered the most straightforward fault-tolerant realizations because they split into gates that act on individual qubits. 

A CSS-T code is formed using a pair $(C_1, C_2)$ of classical linear codes such that $C_1$ contains $C_2$, all codewords of $C_2$ are of even weight, and the shortening of the dual of $C_1$ with respect to the support of each codeword $c$ of $C_2$ is self-dual. In this case, we say that $(C_1, C_2)$ is a CSS-T pair. It is not surprising that it remains an open question to determine asymptotically good families of CSS-T codes \cite{albertocsst}. CSS-T codes from Reed-Muller codes have been explored in \cite{felicecsst}, and some general properties are laid out in \cite{albertocsst}.

In this paper, we study binary CSS-T pairs.
Section~\ref{S:equiv_defs} introduces the basic properties of CSS-T pairs. We give in Theorem~\ref{T:equiv-def} several conditions to determine if a pair of codes $(C_1, C_2)$ is a CSS-T pair. The equivalences of Theorem~\ref{T:equiv-def} allow us to see that the minimum distance of a CSS-T code associated with $(C_1, C_2)$ is lower bounded by the minimum distance of $C_2^\perp$. In Section~\ref{S:poset}, Corollary~\ref{C:monotone} allows us to define a poset $\cP$ of CSS-T pairs relative to the order $(C_1, C_2) \leq (C_1',C_2')$ if and only if $C_i\subset C_i'$ for $i=1,2$. We determine the minimal elements of $\cP$ in Corollary~\ref{minimal}. Using a sequence of results on properties of CSS-T pairs, we provide in Corollary~\ref{C:propagation} a propagation rule for nondegenerate CSS-T codes and characterize the maximal elements of $\cP$ in Theorem~\ref{T:maximal}. In Corollary~\ref{C:special}, we collect special cases when the conditions of \rt{maximal} can be relaxed. As an application, we apply some results of Section~\ref{S:poset} to Reed-Muller codes. In Section \ref{S:cyclic}, we restrict our attention to cyclic and extended cyclic codes. Theorem \ref{T:csstcyclic} provides a characterization of cyclic CSS-T pairs in terms of the defining cyclotomic cosets, and Corollary \ref{C:cycmaximal} characterizes those that are maximal. We find cyclic and extended cyclic codes that outperform binary Reed-Muller codes. In Section~\ref{S:tri} we compare our codes with triorthogonal codes \cite{bravyiTriorthogonalOriginal,haahClassificationTriorthogonal}. A summary and open problems are included in Section~\ref{S:conclusion}. Examples are provided throughout the paper. We conclude this section with a summary of results and a motivating example.

\subsection{Summary of major results}
In this subsection, we provide a guide to the major results of this paper. 

\begin{itemize}
    \item A primary contribution of this paper is the following more straightforward characterization of CSS-T pairs, found in Theorem~\ref{T:equiv-def}: 
Given binary linear codes $C_1$ and $C_2$ of length $n$, $$(C_1, C_2) \textnormal{ is a CSS-T pair if and only if } C_2\subset C_1\cap(C_1^{\star 2})^\perp.$$ 
Among the consequences are the fact that 
 $$C_2 \textnormal{ is self-orthogonal for all CSS-T pairs }(C_1, C_2).$$

\item Another key result is that CSS-T pairs form a poset $\cP$. According to Corollary \ref{C:monotone}, 
given a CSS-T pair $(C_1, C_2)$ $$(C_1',C_2) \textnormal{ is a CSS-T pair } \forall \ C_2\subset C_1'\subset C_1$$ and $$(C_1,C_2') \textnormal{ is a CSS-T pair } \forall \ C_2'\subset C_2.$$

\item  We demonstrate in Theorem \ref{T:maximal} that $$(C_1, C_2) \textnormal{ is a maximal CSS-T pair } \Leftrightarrow
C_1^\perp=C_1\star C_2 \textnormal{ and }
    C_2^\perp=C_1^{\star 2}.$$
Moreover, 
we determine minimal (Corollary \ref{minimal}) and maximal (Proposition \ref{P:max-in-C2} and Corollary \ref{C:max-in-C1}) elements of the poset $\cP$:  
$(C_1,C_2)$
is a maximal CSS-T pair 
\begin{itemize}
    \item 
with respect to  $C_2$ if and only if  $$C_2=C_1\cap(C_1^{\star 2})^\perp.$$ 
\item 
with respect to $C_1$ if and only if 
   $$   C_1=C_2^\perp\cap (C_1\star C_2)^\perp.$$ 
   \end{itemize}

\item Corollary~\ref{C:propagation} contains a propagation rule for nondegenerate CSS-T codes: 
Given a nondegenerate $[[n,k,d]]$ CSS-T code from a CSS-T pair $(C_1,C_2)$, for any $y\in C_2^\perp\cap (C_1\star C_2)^\perp \textnormal{ and } y\not \in C_1$, we have that $(C_1+\langle y\rangle,C_2)$ is a nondegenerate CSS-T pair with parameters
$
[[n,k+1,d]]$.

\item In Theorem \ref{T:csstcyclic}, we prove that for cyclotomic cosets
$I_1, I_2\subset \zn$, 
$$(C(I_1),C(I_2)) \textnormal{ is a CSS-T pair if and only if }
 I_2\subset I_1 \textnormal{ and }
  n\not \in (I_1+I_1+I_2).$$
The corresponding quantum code is a $[[n,\abs{I_1}-\abs{I_2},\geq n-\Amp(J_2)+1]]$ code.

\end{itemize}

\subsection{Motivating example}
We conclude this section with an example to demonstrate the utility of some of the results in the paper. In particular, we show how to apply them to the well known $[[15,1,3]]$ (punctured) quantum Reed-Muller code \cite{andersonquantumRM,quanQuantumRM}. Let $m\geq 1$ and $0\leq d \leq m-1$. Then the $d$-th order {\it binary Reed-Muller code} is defined as
$$
\RM_m(d):=\left\{ (f(v))_{v \in \F_2^m} : f \in \F_2[x_1, \dots, x_m], \deg f \leq d \right\}.
$$
Moreover, it is known that its dual code is $\RM_m(d)^\perp=\RM_m(m-1-d)$. Let $m=4$ and assume that we order the points in $\F_2^4$ so that $(0,0,0,0)$ corresponds to the first coordinate of the corresponding Reed-Muller codes. We consider $C_1=\RM_4(1)^{\{1\}}$, that is, the puncturing of the code $\RM_4(1)$ in the coordinate corresponding to $(0,0,0,0)$. For $C_2$, we consider the simplex code of length $15$. This corresponds to taking $C_2=\RM_4(1)_{\{ 1\}}$, the shortening of $\RM_4(1)$ in the first coordinate. The sets of monomials whose evaluation over $\F_2^4\setminus \{ (0,0,0,0)\}$ generates $C_1$ and $C_2$ are $\{ 1,x_1,x_2,x_3,x_4\}$ and $\{x_1,x_2,x_3,x_4\}$, respectively, and we have $C_2\subset C_1$. If we prove that $C_2\subset (C_1^{\star 2})^\perp$, then $C_2\subset C_1\cap (C_1^{\star 2})^\perp$, and, by Theorem \ref{T:equiv-def}, we would have that $(C_1,C_2)$ is a CSS-T pair. The Schur product $\RM_m(d_1)\star \RM_m(d_2)$, for some $0\leq d_1,d_2\leq m-1$, corresponds to taking the code generated by the evaluation of the products of the corresponding monomials. In this example, $C_1^{\star 2}=C_1\star C_1$ is the code generated by the evaluation over $\F_2^4\setminus \{ (0,0,0,0)\}$ of
$$
\{1, x_1,x_2,x_3,x_4,x_1x_2,x_1x_3,x_1x_4,x_2x_3,x_2x_4,x_3x_4\}.
$$
This actually corresponds to the puncturing in the first position of $\RM_4(2)$, that is, $C_1^{\star 2}=\RM_4(2)^{\{1\}}$. Since the dual of a puncturing is the corresponding shortening of the dual, we obtain $(C_1^{\star 2})^\perp =\RM_4(1)_{\{1\}}=C_2$. Thus, $(C_1,C_2)$ is a CSS-T pair. Analogously, one can prove that $C_1\star C_2$ is generated by 
$$
\{x_1,x_2,x_3,x_4,x_1x_2,x_1x_3,x_1x_4,x_2x_3,x_2x_4,x_3x_4\},
$$
that is, $C_1\star C_2=\RM_4(2)_{\{1\}}=C_1^\perp$. We proved before that $(C_1^{\star 2})^\perp =C_2$, which implies $C_1^{\star 2}=C_2^\perp$. By Theorem \ref{T:maximal}, we have that the $[[15,1,3]]$ (punctured) quantum Reed-Muller code is maximal with respect to the CSS-T poset $\mathcal{P}$.

\section{Equivalent Definitions}
\label{S:equiv_defs}
 
In this section, we give equivalent conditions for a pair of binary codes $(C_1, C_2)$ to be a CSS-T pair.

We start by fixing some notations for the rest of the paper. For a positive integer $n$, we write $[n]:=\{1, \dots, n \}$. We denote by $\1$ the element $(1,\ldots,1)$, where the number of entries depends on the context. We say a binary code $C$ of length $n$, dimension $k$, and minimum Hamming distance $d$ is an $[n,k,d]$ code. Let $C\subset \F_2^n$ be a code and $i\in [n]$. The \textit{dual} of $C$ with respect to the Euclidean inner product is denoted by $C^\perp$. The \textit{shortening} of $C$ in $\{i\}$, denoted by $C_{\{i\}}$, is the binary code
$$
C_{\{i\}} := \{ (c_1,\dots,c_{i-1},c_{i+1},\dots,c_n) : (c_1,\dots,c_{i-1},0,c_{i+1},\dots,n)\in C\}. 
$$
The \textit{puncturing} of $C$ in $\{i\}$, denoted by $C^{\{i\}}$, is the binary code
$$
C^{\{i\}} := \{ (c_1,\dots,c_{i-1},c_{i+1},\dots,c_n) : (c_1,\dots,c_{i-1},c_i,c_{i+1},\dots,c_n)\in C, \text{for some } c_i \in \F_2 \}. 
$$
For $S\subset[n]$, we write $C_S$ (resp. $C^S$) for the successive shortening (resp. puncturing) of $C$ in the coordinates indexed by the elements in $S$. 

The \textit{Schur product} of two vectors $x=(x_1,\ldots,x_n)$ and $y=(y_1,\ldots,y_n)$ in $\F_2^n$ is denoted and defined by
$$x \star y := (x_1y_1,\ldots, x_ny_n).$$
The \textit{Schur product} of two binary codes $C_1$ and $C_2$, denoted by $C_1\star C_2$, is defined as the binary code generated by the vectors
$$\left\{c_1 \star c_2 : c_i  \in C_i \right\}.$$
The $t$-\textit{fold Schur product} of $C$ with itself is $C^{\star t}:=\underbrace{C\star \cdots \star C}_{t}$, the $t$-th Schur power of $C$.
Note that for a binary code $C$, we always have $C\subset C^{\star 2}$ since $x\star x=x$ for any binary vector $x \in \F_2^n$.

Recall that a code is \textit{of even weight}, or \textit{even-weighted}, provided all of its codewords have even Hamming weight. For $x\in C$, we use $Z(x)$ to denote the set of positions of the zero coordinates of $x$, i.e., $Z(x)=[n]\setminus\supp(x)$, where $\supp(x)$ is the support of $x$ (set of nonzero entries of $x$).

We use $[[n,k,d]]$ to denote a quantum code that encodes $k$ logical qubits into $n$ physical qubits and can correct up to $d - 1$ erasures. We recall the CSS construction  \cite{CalderbankShor_96, Steane_96}.

\begin{theorem}[CSS Construction]\label{T:CSS}
Let $C_i\subset \F_{2}^n$ be linear codes of dimension $k_i$, for $i=1,2$, such that $C_2\subset C_1$. Then, there is an $[[n,k_1-k_2,d]]$ quantum code with 
$$
d=\min \left\{\wt\left(C_1\setminus C_2 \right), \wt\left(C_2^\perp\setminus C_1^\perp\right) \right\}.
$$
\end{theorem}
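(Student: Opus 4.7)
The plan is to produce an explicit stabilizer description of the code and read off its parameters. First, I would fix a parity-check matrix $H_1$ of $C_1$ (its $n-k_1$ rows form a basis of $C_1^\perp$) and a generator matrix $G_2$ of $C_2$ (its $k_2$ rows form a basis of $C_2$). I then take as stabilizer generators the $n-k_1$ X-type Paulis $X^{h}$ for $h$ a row of $H_1$, together with the $k_2$ Z-type Paulis $Z^{g}$ for $g$ a row of $G_2$. The hypothesis $C_2\subset C_1$ forces $\la h,g\ra=0$ for all such $h,g$, so the two blocks of generators commute; combined with the fact that pure-X and pure-Z generators cannot multiply to $-I$, this yields a stabilizer group $S$ of rank $(n-k_1)+k_2$, defining a code of dimension $k_1-k_2$.

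Next, I would describe the normalizer $N(S)$. Since $X^a$ and $Z^g$ commute iff $\la a,g\ra=0$, a Pauli $P=X^a Z^b$ (up to phase) commutes with every X-generator iff $b\in C_1$ and with every Z-generator iff $a\in C_2^\perp$, while $P\in S$ iff additionally $a\in C_1^\perp$ and $b\in C_2$. Thus the nontrivial logical operators are exactly the cosets of the $X^a Z^b$ with $a\in C_2^\perp$, $b\in C_1$, and $(a,b)\not\in C_1^\perp\times C_2$.

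The distance is then a two-sided argument. For the lower bound, note that $\wt(X^a Z^b)=\abs{\supp(a)\cup\supp(b)}\geq\max\{\wt(a),\wt(b)\}$. If $a\not\in C_1^\perp$, then $a\in C_2^\perp\setminus C_1^\perp$ and $\wt(P)\geq \wt(C_2^\perp\setminus C_1^\perp)$; if instead $b\not\in C_2$, then $b\in C_1\setminus C_2$ and $\wt(P)\geq \wt(C_1\setminus C_2)$. Taking the infimum over nontrivial $P$ yields $d\geq \min\{\wt(C_1\setminus C_2),\wt(C_2^\perp\setminus C_1^\perp)\}$. The matching upper bound follows by choosing a minimum-weight $a^\star\in C_2^\perp\setminus C_1^\perp$, giving the pure-X logical $X^{a^\star}$, and a minimum-weight $b^\star\in C_1\setminus C_2$, giving the pure-Z logical $Z^{b^\star}$.

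The essentially only non-routine step is the lower bound on the distance, where mixed Paulis $X^a Z^b$ must be shown not to beat the pure-type minima. The key inequality $\wt(X^a Z^b)\geq \max\{\wt(a),\wt(b)\}$ -- a direct consequence of the CSS separation between X- and Z-stabilizers -- handles this cleanly, so I expect no additional technical obstacles beyond routine bookkeeping.
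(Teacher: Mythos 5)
Your proof is correct and complete. Note that the paper itself does not prove Theorem~\ref{T:CSS}: it is recalled from Calderbank--Shor and Steane with a citation only, so there is no in-paper argument to compare against. What you give is the standard stabilizer-formalism derivation, and every step checks out: the two blocks of generators commute precisely because $C_2\subset C_1$, the group has rank $(n-k_1)+k_2$ and excludes $-I$, so the code encodes $k_1-k_2$ qubits; the normalizer computation correctly identifies the logical representatives $X^aZ^b$ with $a\in C_2^\perp$, $b\in C_1$, $(a,b)\notin C_1^\perp\times C_2$; and the inequality $\wt(X^aZ^b)=\abs{\supp(a)\cup\supp(b)}\ge\max\{\wt(a),\wt(b)\}$ is exactly the right tool to show that mixed Paulis cannot beat the pure-type minima, while the pure logicals $X^{a^\star}$ and $Z^{b^\star}$ give the matching upper bound, so the distance is an equality as claimed. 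One cosmetic remark: you attach the $X$-stabilizers to $C_1^\perp$ and the $Z$-stabilizers to $C_2$, which is the Hadamard conjugate of the more common convention (code states $\sum_{y\in C_2}|x+y\ra$ for $x\in C_1$, with $X$-stabilizers from $C_2$ and $Z$-stabilizers from $C_1^\perp$); this swaps which of the sets $C_1\setminus C_2$ and $C_2^\perp\setminus C_1^\perp$ governs the $X$- versus $Z$-logical weights, but since the theorem only asserts the minimum of the two, the statement is unaffected.
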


Let $d^*:=\min \{ \wt(C_1),\wt(C_2^\perp)\}$. If $d=d^*$, the corresponding quantum code is said to be \textit{nondegenerate}, and it is called \textit{degenerate} if $d>d^*$.

The following definition was given in \cite{calderbankclassicalcsst}.

\begin{definition}
    Let $C_2\subset C_1$ be binary codes. Then $(C_1, C_2)$ is a {\it CSS-T pair} if $C_2$ is even-weighted and for any $x\in C_2$, the shortening $(C_1^\perp)_{Z(x)}$ contains a self-dual code.
\end{definition}

\begin{theorem}\label{T:equiv-def} Let $C_1$ and $C_2$ be binary codes of length $n$.
    The following are equivalent.
    \begin{itemize}
        \item[\rm (1)] $(C_1, C_2)$ is a CSS-T pair.
        \item[\rm (2)] $C_2\subset C_1$, $C_2$ is even-weighted, and for any $x\in C_2$ the code $C_1^{Z(x)}$ is self-orthogonal.
        \item[\rm (3)] $C_2\subset C_1\cap(C_1^{\star 2})^\perp$.
        \item[\rm (4)] $C_1^\perp+C_1^{\star 2}\subset C_2^\perp$.
    \end{itemize}
Moreover, if $(C_1, C_2)$ is a CSS-T pair then $C_2$ is self-orthogonal.
\end{theorem}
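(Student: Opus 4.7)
The natural approach is to prove the chain $(2)\Leftrightarrow(3)$, $(3)\Leftrightarrow(4)$, $(1)\Leftrightarrow(2)$, and then deduce the moreover.

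The engine behind $(2)\Leftrightarrow(3)$ is the identity
\[
\la c\star c',\, x\ra \;=\; \sum_{i\in\supp(x)} c_i c'_i \;=\; \la c|_{\supp(x)},\, c'|_{\supp(x)}\ra
\]
for any $c,c',x\in\F_2^n$: letting $c,c'$ range over $C_1$, this matches the condition $x\in (C_1^{\star 2})^\perp$ with the condition that $C_1^{Z(x)}$ is self-orthogonal. I would then observe that the even-weight clause in (2) is actually free given (3), because for $x\in C_2\subset C_1$ the identity $x=x\star x$ places $x$ in $C_1^{\star 2}$, forcing $\la x,x\ra=\wt(x)\bmod 2 = 0$. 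The equivalence $(3)\Leftrightarrow(4)$ is then pure duality: $C_2\subset C_1\cap(C_1^{\star 2})^\perp$ iff $C_1^\perp\subset C_2^\perp$ and $C_1^{\star 2}\subset C_2^\perp$, i.e.\ $C_1^\perp + C_1^{\star 2}\subset C_2^\perp$.

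For $(1)\Leftrightarrow(2)$, I would invoke the standard identity $(C_1^{Z(x)})^\perp = (C_1^\perp)_{Z(x)}$, which takes place in $\F_2^{\wt(x)}$. The question then collapses to the following lemma: for a code $D\subset \F_2^\ell$ of even length $\ell$, $D$ contains a self-dual subcode if and only if $D\supset D^\perp$. One direction is trivial; for the other, extend $D^\perp$ to a maximal self-orthogonal subspace $E$ of $D$, and show by a dimension argument on the even-weight hyperplane of $\F_2^\ell$ that whenever $\dim E<\ell/2$ one can find an even-weight vector in $E^\perp\setminus E$, enlarging $E$ and violating maximality; hence $\dim E=\ell/2$ and $E$ is self-dual. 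The even-weight hypothesis on $C_2$ supplies exactly the parity of $\ell=\wt(x)$.

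The moreover falls out immediately: from $C_2\subset(C_1^{\star 2})^\perp\subset C_1^\perp$ (using $C_1\subset C_1^{\star 2}$) and $C_1^\perp\subset C_2^\perp$ (using $C_2\subset C_1$), one concludes $C_2\subset C_2^\perp$. The main obstacle is the even-length subtlety in the $(1)\Leftrightarrow(2)$ step: ``contains a self-dual code'' is genuinely stronger than ``dual is self-orthogonal'' in odd length, so care is needed to exploit that $C_2$ is even-weighted. All other pieces are routine bilinear algebra over $\F_2$ combined with duality of puncturing and shortening.
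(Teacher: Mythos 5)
Your proposal is correct, and for three of the four links --- $(2)\Leftrightarrow(3)$ via the identity $\la c\star c',x\ra=\sum_{i\in\supp(x)}c_ic'_i$, $(3)\Leftrightarrow(4)$ by duality, and the ``moreover'' via the chain $C_2\subset(C_1^{\star2})^\perp\subset C_1^\perp\subset C_2^\perp$ --- it coincides with the paper's argument, including the observation that the even-weight clause in (2) is implied by (3). Where you genuinely diverge is $(1)\Leftrightarrow(2)$: the paper does not prove this step at all but cites it from the Reed--Muller CSS-T paper of Andrade et al., whereas you supply a self-contained proof through the lemma that a code $D\subset\F_2^\ell$ of even length contains a self-dual subcode if and only if $D^\perp\subset D$. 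Your sketch of that lemma is sound and completes correctly: if $E$ is a maximal self-orthogonal subspace of $D$ containing $D^\perp$ and every even-weight vector of $E^\perp$ already lies in $E$, then $E=E^\perp\cap\la\1\ra^\perp$ (using that $E\subset E^\perp$ and $E\subset\la\1\ra^\perp$ since self-orthogonal binary vectors have even weight), whence $\dim E\ge\dim E^\perp-1=\ell-\dim E-1$ forces $\dim E\ge\ell/2$ for even $\ell$; so a strictly smaller $E$ can always be enlarged, contradicting maximality. This buys a fully self-contained theorem at the cost of one extra lemma, and you are right to flag that the even-length hypothesis (supplied by $C_2$ being even-weighted, with the harmless convention that the length-$0$ code arising from $x=0$ is self-dual) is exactly what makes ``contains a self-dual code'' equivalent to ``is dual-containing''; in odd length the two notions differ and the equivalence would fail.
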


\begin{proof}
The equivalence of (1) and (2) was proved in \cite{felicecsst}. (See also \cite{albertocsst} for the case of arbitrary fields of characteristic 2.) Also, (3) and (4) are equivalent by taking the duals. 
    
To show the equivalence of (2) and (3), note that for any $x\in C_2$, the code $C_1^{Z(x)}$ is self-orthogonal if and only if $x\in (C_1^{\star 2})^\perp$. Indeed, $x\in (C_1^{\star 2})^\perp$ if and only if $\sum_{i=1}^n x_iu_iv_i=0$ for any $u,v\in C_1$. As $x$ is a binary vector, we can write this as $\displaystyle\sum_{i\in \supp(x)} u_iv_i=0$, i.e., $u'\cdot v'=0$ for any $u',v'\in C_1^{Z(x)}$, that is $C_1^{Z(x)}$ is self-orthogonal. On the other hand, if $C_2\subset C_1\cap (C_1^{\star 2})^\perp$, then we have
$$C_2\subset C_1\subset C_1^{\star 2}\subset C_2^\perp.$$
Thus, $C_2$ is even-weighted because it is self-orthogonal.
\end{proof}

\begin{remark}
Note that if $(C_1, C_2)$ is a CSS-T pair then, by part (4) of Theorem~\ref{T:equiv-def}, $C_1^{\star 2}\subset C_2^\perp$, which is equivalent to $C_1\star C_2\subset C_1^\perp$. This observation previously appeared in \cite[Remark 3]{calderbankclassicalcsst}.
\end{remark}

A {\it CSS-T code} is a code obtained via a CSS-T pair and Theorem~\ref{T:CSS}. The equivalences of Theorem~\ref{T:equiv-def} allow us to see some structural properties of CSS-T codes. In particular, the minimum distance of a CSS-T code associated with $(C_1, C_2)$ is lower bounded by the minimum distance of $C_2^\perp$.

\begin{corollary}\label{C:dmin2perp}
Let $(C_1, C_2)$ be a CSS-T pair. Then 
$$
\min \{\wt(C_1),\wt(C_2^\perp)\}=\wt(C_2^\perp),
$$
and the parameters of the corresponding CSS-T code are $$[[n,k_1-k_2,\ge \wt(C_2^\perp)]].$$
Moreover, if the code is nondegenerate, we have equality in the minimum distance.
\end{corollary}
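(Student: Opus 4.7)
The plan is to extract the containment $C_1\subset C_2^\perp$ from Theorem~\ref{T:equiv-def} and then use this, together with the CSS construction (Theorem~\ref{T:CSS}), to compare the two minimum weights appearing in the distance formula. The whole argument is short because the equivalences already established in Theorem~\ref{T:equiv-def} do essentially all the work.

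First I would apply part (3) of Theorem~\ref{T:equiv-def}, which gives
$$C_2\subset C_1\cap (C_1^{\star 2})^\perp.$$
Since $C_1\subset C_1^{\star 2}$, chaining these inclusions produces the chain $C_2\subset C_1\subset C_1^{\star 2}\subset C_2^\perp$ already noted in the proof of Theorem~\ref{T:equiv-def}. In particular $C_1\subset C_2^\perp$, which immediately yields $\wt(C_2^\perp)\leq \wt(C_1)$, so
$$\min\{\wt(C_1),\wt(C_2^\perp)\}=\wt(C_2^\perp),$$
establishing the first claim.

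Next I would use Theorem~\ref{T:CSS} to write the minimum distance of the associated CSS-T code as
$$d=\min\bigl\{\wt(C_1\setminus C_2),\,\wt(C_2^\perp\setminus C_1^\perp)\bigr\}.$$
Each of these is bounded below by its ambient minimum weight: $\wt(C_1\setminus C_2)\geq \wt(C_1)$ and $\wt(C_2^\perp\setminus C_1^\perp)\geq \wt(C_2^\perp)$. Combining with the first part gives
$$d\geq \min\{\wt(C_1),\wt(C_2^\perp)\}=\wt(C_2^\perp),$$
so the parameters of the code are $[[n,k_1-k_2,\ge \wt(C_2^\perp)]]$, as claimed.

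Finally, for the nondegenerate case I would simply invoke the definition $d^*=\min\{\wt(C_1),\wt(C_2^\perp)\}$ given right after Theorem~\ref{T:CSS}. By the first paragraph $d^*=\wt(C_2^\perp)$, and nondegeneracy means $d=d^*$, so equality holds. There is no real obstacle in this corollary; the only subtle point is recognizing that Theorem~\ref{T:equiv-def}(3) forces the inclusion $C_1\subset C_2^\perp$, which is exactly what converts a two-term minimum into a single term.
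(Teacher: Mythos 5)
Your proof is correct and follows essentially the same route as the paper: the paper derives $\wt(C_2^\perp)\le\wt(C_1^\perp+C_1^{\star 2})\le\wt(C_1^{\star 2})\le\wt(C_1)$ from Theorem~\ref{T:equiv-def}(4), which is just the dual formulation of the containment chain $C_1\subset C_1^{\star 2}\subset C_2^\perp$ you extract from part (3). You additionally spell out the comparison with the CSS distance formula and the nondegenerate case, which the paper leaves implicit; that is harmless and accurate.
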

\begin{proof}
From Theorem \ref{T:equiv-def}~(4), we see that
$$
\wt(C_2^\perp)\leq \wt(C_1^\perp+C_1^{\star 2})\leq \wt(C_1^{\star 2})\leq \wt(C_1).
$$ 
\end{proof}

\section{The poset of CSS-T pairs}
\label{S:poset}

Let $(C_1, C_2)$ be a CSS-T pair. By Corollary \ref{C:dmin2perp}, the CSS-T code associated with the pair $(C_1, C_2)$ has parameters $[[n,k_1-k_2,\ge \wt(C_2^\perp)]]$. Thus, increasing the dimension of $C_1$ will increase the dimension of the associated CSS-T code, and the minimum distance is still bounded by $\wt(C_2^\perp)$. In particular, if the associated CSS-T code is nondegenerate, then increasing the dimension of $C_1$ does not change the minimum distance (see Corollary \ref{C:dmin2perp}). On the other hand, increasing the dimension of $C_2$ could improve the minimum distance but decrease the dimension of the resulting CSS-T code.

The following Corollary allows us to define a partial order on the set of CSS-T pairs. The result shows that all the CSS-T pairs are determined by those CSS-T pairs $(C_1, C_2)$ that cannot be extended to another CSS-T pair $(C_1^\prime, C_2^\prime)$, where $C_1=C_1^\prime$ or $C_2=C_2^\prime$.

\begin{corollary}\label{C:monotone}
Let $(C_1, C_2)$ be a CSS-T pair. Then, the following hold.
\begin{itemize}
    \item[\rm (1)] $(C_1',C_2)$ is a CSS-T pair for any $C_2\subset C_1'\subset C_1$.
    \item[\rm (2)] $(C_1,C_2')$ is a CSS-T pair for any $C_2'\subset C_2$.
\end{itemize}
\end{corollary}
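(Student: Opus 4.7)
The plan is to reduce both statements to the clean criterion given by part (3) of Theorem~\ref{T:equiv-def}, namely that $(C_1, C_2)$ is a CSS-T pair if and only if $C_2 \subset C_1 \cap (C_1^{\star 2})^\perp$. Working from this characterization, both monotonicity statements should follow from (a) set-theoretic containment, and (b) the monotonicity of the Schur square operation $C \mapsto C^{\star 2}$.

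For part~(2), the argument is essentially immediate. Assuming $(C_1, C_2)$ is a CSS-T pair, Theorem~\ref{T:equiv-def} yields $C_2 \subset C_1 \cap (C_1^{\star 2})^\perp$. If $C_2' \subset C_2$, then trivially $C_2' \subset C_1 \cap (C_1^{\star 2})^\perp$, and invoking Theorem~\ref{T:equiv-def} again shows that $(C_1, C_2')$ is a CSS-T pair. No further computation is needed.

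For part~(1), the key observation I would record first is that the Schur square is monotone under inclusion: if $C_1' \subset C_1$, then every generator $c \star c'$ of $(C_1')^{\star 2}$ with $c, c' \in C_1'$ also lies in $C_1^{\star 2}$, so $(C_1')^{\star 2} \subset C_1^{\star 2}$. Dualizing reverses the inclusion to give $(C_1^{\star 2})^\perp \subset ((C_1')^{\star 2})^\perp$. Combining this with the hypothesis $C_2 \subset (C_1^{\star 2})^\perp$ gives $C_2 \subset ((C_1')^{\star 2})^\perp$, and together with the assumption $C_2 \subset C_1'$ we conclude
$$C_2 \subset C_1' \cap ((C_1')^{\star 2})^\perp,$$
which by Theorem~\ref{T:equiv-def} is equivalent to $(C_1', C_2)$ being a CSS-T pair.

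There is no genuine obstacle here; the only mild subtlety is remembering that monotonicity of Schur products requires checking generators rather than arbitrary elements, but this is handled in one line. The whole proof amounts to two applications of Theorem~\ref{T:equiv-def} together with the inclusion-reversing behavior of the dual.
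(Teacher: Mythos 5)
Your proof is correct, and both parts go through exactly as you describe; the reduction to Theorem~\ref{T:equiv-def}~(3) together with the monotonicity of $C\mapsto C^{\star 2}$ (checked on generators) and the inclusion-reversal of duality is all that is needed. The paper takes a slightly different and less uniform route: for part~(2) it also cites Theorem~\ref{T:equiv-def} (condition~(2) rather than~(3)), but for part~(1) it works directly from the original definition of a CSS-T pair, observing that $C_1'\subset C_1$ gives $(C_1'^\perp)_{Z(x)}\supset (C_1^\perp)_{Z(x)}$ for each $x\in C_2$, so that a self-dual code contained in the latter shortening is also contained in the former. Your version has the advantage of treating both monotonicity statements by one and the same algebraic criterion, making the poset structure transparent from the single containment $C_2\subset C_1\cap(C_1^{\star 2})^\perp$; the paper's version of~(1) stays closer to the original shortening-based definition and so does not require the Schur-square machinery at all. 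Both arguments are complete and essentially one line each, so the difference is one of packaging rather than substance.
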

\begin{proof}
(1) As $C_1'\subset C_1$, then $(C_1'^\perp)_{Z(x)}\supset (C_1^\perp)_{Z(x)}$ for any $x\in C_2$. Hence, if $(C_1^\perp)_{Z(x)}$ contains a self-dual code, then  $(C_1'^\perp)_{Z(x)}$ also contains a self-dual code.

(2) It is a direct consequence of \rt{equiv-def}~(2).
\end{proof}
We are ready to define a partial order in the set of CSS-T pairs.
\begin{definition}
We denote by $\cP$ the \textit{poset} of CSS-T pairs relative to the order $(C_1, C_2) \leq (C_1',C_2')$ if and only if $C_i\subset C_i'$ for $i=1,2$.
\end{definition}
From now on, we discard the trivial pairs $(C_1,\{0\})$ from $\cP$. Denote by $\la x\ra$ the code generated by an element $x \in \F_2^n$.
\begin{corollary}\label{minimal}
The set of minimal elements of $\cP$ is
$$\left\{ (\la u\ra, \la u\ra) : u \text{ even }, u \in \F_2^n \right\}.$$
\end{corollary}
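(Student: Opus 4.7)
The plan is to prove the characterization in both directions: first, verify that every pair of the stated form is a minimal CSS-T pair; second, show that any minimal CSS-T pair must be of this form.

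For the first direction, I would take $u\in\F_2^n$ with even weight and set $C_1=C_2=\langle u\rangle$. To check this is a CSS-T pair, I would invoke the equivalence of Theorem~\ref{T:equiv-def}(3) and verify that $\langle u\rangle\subset \langle u\rangle\cap(\langle u\rangle^{\star 2})^\perp$. Since $u\star u=u$ for any binary vector, $\langle u\rangle^{\star 2}=\langle u\rangle$, so the condition reduces to $u\in\langle u\rangle^\perp$, i.e.\ $u\cdot u=\wt(u)\equiv 0\pmod 2$, which holds because $u$ is even. Minimality is then automatic: any CSS-T pair $(C_1',C_2')\le(\langle u\rangle,\langle u\rangle)$ forces $C_2'\subset\langle u\rangle$; since $\dim\langle u\rangle=1$ and we discard pairs with $C_2'=\{0\}$, we must have $C_2'=\langle u\rangle$, and hence $C_1'=\langle u\rangle$ as well.

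For the second direction, I would start with an arbitrary minimal pair $(C_1,C_2)\in\cP$ (so $C_2\ne\{0\}$) and pick any nonzero $u\in C_2$. The key observation is that $u$ is automatically even: by Theorem~\ref{T:equiv-def}, $C_2$ is self-orthogonal, in particular even-weighted. Therefore, by the first part of the proof, $(\langle u\rangle,\langle u\rangle)$ is itself a CSS-T pair. Since $\langle u\rangle\subset C_2\subset C_1$, this pair sits below $(C_1,C_2)$ in $\cP$, and minimality forces $(C_1,C_2)=(\langle u\rangle,\langle u\rangle)$.

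There is no substantial obstacle; the argument is a short application of Theorem~\ref{T:equiv-def}(3) combined with the convention that trivial pairs with $C_2=\{0\}$ are excluded from $\cP$. The only point requiring a line of justification is that $u$ can be chosen even, which follows from the self-orthogonality of $C_2$ already established in Theorem~\ref{T:equiv-def}.
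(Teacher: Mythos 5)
Your proof is correct and is essentially an expanded version of the paper's argument, which simply cites Corollary~\ref{C:monotone}: you verify via Theorem~\ref{T:equiv-def}(3) that each $(\la u\ra,\la u\ra)$ with $u$ even is a CSS-T pair and is minimal once trivial pairs are excluded, and you use the self-orthogonality of $C_2$ to descend from an arbitrary pair to such a pair. The details you supply (in particular that $u\star u=u$ gives $\la u\ra^{\star 2}=\la u\ra$, and that $C_2$ even-weighted lets you pick $u$) are exactly what the paper leaves implicit, so no issues.
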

\begin{proof}
This is a consequence of Corollary~\ref{C:monotone}.
\end{proof}
We are interested in the set of maximal elements of $\cP$.
\begin{definition}
We say that $(C_1, C_2)\in\cP$ is {\it maximal in $C_1$} if $(C_1, C_2)\leq (C_1',C_2)$ implies $C_1=C_1'$. Similarly, $(C_1, C_2)$ is {\it maximal in $C_2$} if $(C_1, C_2)\leq (C_1,C_2')$ implies $C_2=C_2'$.
\end{definition}
Note that a pair $(C_1, C_2)$ is a maximal element of $\cP$ if and only if $(C_1, C_2)$ is maximal in both $C_1$ and $C_2$. Some maximal elements in $\cP$ are given by the pairs $(C_1, C_2)$ where $C_1$ has codimension one. Indeed, by \rt{equiv-def}~(4), $C_1^{\star 2}\subset C_2^\perp$. Since we assume that $C_2$ is nontrivial, we see that  $C_1^{\star 2}$ is a proper subspace of $\F_2^n$, obtaining thus that $C_1=C_1^{\star 2}=C_2^\perp$. Hence, $C_2$ is a one-dimensional subspace of $C_1$ generated by an even-weight vector. In fact, we show in \rt{maximal} that the property $C_1^{\star 2}=C_2^\perp$ holds for any maximal pair $(C_1, C_2)$.
 
We start by describing pairs that are maximal in $C_2$.

\begin{proposition}\label{P:max-in-C2}
A pair $(C_1, C_2)\in\cP$ is maximal in $C_2$ if and only if $C_2=C_1\cap(C_1^{\star 2})^\perp$.
\end{proposition}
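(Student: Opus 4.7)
The proposition is essentially immediate from Theorem~\ref{T:equiv-def}~(3), which characterizes CSS-T pairs by the single inclusion $C_2 \subset C_1 \cap (C_1^{\star 2})^\perp$. The plan is to exploit that once $C_1$ is fixed, the code $C_1 \cap (C_1^{\star 2})^\perp$ is a canonical upper bound on the possible $C_2$'s paired with $C_1$.

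For the forward direction, I would assume $(C_1,C_2) \in \cP$ is maximal in $C_2$. Let $\widetilde{C_2} := C_1 \cap (C_1^{\star 2})^\perp$. By Theorem~\ref{T:equiv-def}~(3) applied to the CSS-T pair $(C_1,C_2)$, we have $C_2 \subset \widetilde{C_2}$. Since $\widetilde{C_2} \subset \widetilde{C_2}$ trivially, Theorem~\ref{T:equiv-def}~(3) also says that $(C_1, \widetilde{C_2})$ is itself a CSS-T pair (note $\widetilde{C_2}$ is nontrivial, so this pair sits in $\cP$; otherwise maximality is vacuous). Then $(C_1,C_2) \leq (C_1,\widetilde{C_2})$ in $\cP$, and maximality in $C_2$ forces $C_2 = \widetilde{C_2}$.

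For the reverse direction, suppose $C_2 = C_1 \cap (C_1^{\star 2})^\perp$. If $(C_1, C_2')$ is any CSS-T pair with $C_2 \subset C_2'$, then Theorem~\ref{T:equiv-def}~(3) gives $C_2' \subset C_1 \cap (C_1^{\star 2})^\perp = C_2$, so $C_2' = C_2$, establishing maximality in $C_2$.

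No real obstacle arises here; the whole argument rests on using Theorem~\ref{T:equiv-def}~(3) in both directions to translate the poset-theoretic notion of maximality into the inclusion bounding $C_2$. The only minor point worth flagging in the write-up is verifying that $C_1 \cap (C_1^{\star 2})^\perp$ is a genuine (nontrivial) CSS-T partner for $C_1$ once we already know \emph{some} nontrivial $C_2 \subset C_1 \cap (C_1^{\star 2})^\perp$ exists, which is automatic from $\cP$ excluding the pair with $C_2 = \{0\}$.
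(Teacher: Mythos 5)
Your proof is correct and follows exactly the paper's approach: the paper's entire proof is a citation of Theorem~\ref{T:equiv-def}~(3), and your write-up simply spells out the same argument, correctly noting that $C_1\cap(C_1^{\star 2})^\perp$ is the canonical largest CSS-T partner of $C_1$ and that its nontriviality follows from $C_2$ being nontrivial.
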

\begin{proof}
This is provided by \rt{equiv-def}~(3).
\end{proof}
The following proposition gives a criterion for extending a CSS-T pair $(C_1, C_2)$ to a pair $(C_1', C_2)$ with $\dim C_1'=\dim C_1+1$. 

\begin{proposition}\label{P:step}
    Let $(C_1, C_2)$ be a CSS-T pair and $y \in \F_2^n$. Then 
    $(C_1+\la y\ra,C_2)$ is a CSS-T pair if and only if $C_1\star y+\la y\ra\subset C_2^\perp$, or equivalently, $y\in C_2^\perp\cap (C_1\star C_2)^\perp$.
\end{proposition}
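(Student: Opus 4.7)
The plan is to reduce everything to the clean algebraic characterization in Theorem~\ref{T:equiv-def}(3): a pair $(C_1',C_2)$ with $C_2\subset C_1'$ is a CSS-T pair if and only if $(C_1')^{\star 2}\subset C_2^\perp$. Setting $C_1':=C_1+\la y\ra$, the containment $C_2\subset C_1\subset C_1'$ is automatic, so only the Schur-square condition needs to be analyzed.

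First I would expand $(C_1+\la y\ra)^{\star 2}$ using the bilinearity of the Schur product together with the key binary identity $y\star y=y$ in $\F_2^n$. This gives
\[
(C_1+\la y\ra)^{\star 2}=C_1^{\star 2}+C_1\star y+\la y\ra.
\]
Since $(C_1,C_2)$ is already a CSS-T pair, Theorem~\ref{T:equiv-def}(4) provides $C_1^{\star 2}\subset C_2^\perp$ for free, so the condition $(C_1+\la y\ra)^{\star 2}\subset C_2^\perp$ collapses precisely to the requirement $C_1\star y+\la y\ra\subset C_2^\perp$. This proves the first stated equivalence.

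Next I would translate this into the orthogonality form. The inclusion $\la y\ra\subset C_2^\perp$ is literally the condition $y\in C_2^\perp$. For the remaining inclusion $C_1\star y\subset C_2^\perp$, I would use the adjointness identity
\[
y\cdot(c\star z)=\sum_{i=1}^n y_ic_iz_i=(y\star c)\cdot z
\]
valid for all $c\in C_1$ and $z\in C_2$, so that $C_1\star y\subset C_2^\perp$ is equivalent to saying $y$ is orthogonal to every generator $c\star z$ of $C_1\star C_2$, i.e. $y\in (C_1\star C_2)^\perp$. Combining both, $C_1\star y+\la y\ra\subset C_2^\perp$ is the same as $y\in C_2^\perp\cap(C_1\star C_2)^\perp$.

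I do not anticipate a hard step; the only subtle point is correctly exploiting $y\star y=y$ when expanding the Schur square so that no spurious term $\la y\star y\ra$ is forgotten or double-counted. Everything else is a direct bookkeeping argument using Theorem~\ref{T:equiv-def} and the self-adjointness of the Schur product under the Euclidean pairing.
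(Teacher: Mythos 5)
Your proof is correct and follows essentially the same route as the paper: both reduce to the Schur-square condition of Theorem~\ref{T:equiv-def}, expand $(C_1+\la y\ra)^{\star 2}=C_1^{\star 2}+C_1\star y+\la y\ra$ via $y\star y=y$, and absorb $C_1^{\star 2}\subset C_2^\perp$ from the hypothesis. Your explicit verification of the ``or equivalently'' clause via the adjointness identity $y\cdot(c\star z)=(y\star c)\cdot z$ is a nice touch that the paper's proof leaves implicit.
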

\begin{proof}
Define $C_1' := C_1+\la y\ra$. Note that $C_1'^\perp\subset C_1^\perp$. Since $(C_1, C_2)$ is a CSS-T pair, we have $C_1^\perp+C_1^{\star 2}\subset C_2^\perp$ by \rt{equiv-def}~(4). Thus,
$$C_1'^\perp\subset C_1^\perp \subset C_1^\perp+C_1^{\star 2}\subset C_2^\perp.$$

By \rt{equiv-def}~(4), $(C_1',C_2)$ is a CSS-T pair if and only if $C_1'^\perp+C_1'^{\star 2}\subset C_2^\perp$. So, it is enough to verify $C_1'^{\star 2}\subset C_2^\perp$ if and only if $C_1\star y+\la y\ra\subset C_2^\perp$.
    It remains to notice that $C_1'^{\star 2}=C_1^{\star 2}+C_1\star y+\la y\ra$, as $y\star y=y$.
\end{proof}

Unlike Proposition~\ref{P:max-in-C2}, Proposition \ref{P:step} does not allow us to find the maximal $C_1$ for a given $C_2$ to get a CSS-T pair as the next example shows.
        
\begin{example}\rm
    Let $C=\langle (1,1,1,1,1,1)\rangle$. By Proposition \ref{minimal}, $(C,C)\in\mathcal{P}$ and it is a minimal element. We have $C^\perp\cap(C^{\star 2})^\perp=C^\perp$. Let $v=(1,1,1,1,0,0), w=(1,0,0,0,0,1)\in C^\perp$. Thus $(C+\langle v\rangle, C)\in \mathcal{P}$, but $(C+\langle v,w\rangle,C)\notin\mathcal{P}$, despite $v,w\in C^\perp$.

    We have: $$C^\perp\cap((C+\langle v\rangle)\star C)^\perp=\langle (1,1,0,0,0,0), (1,0,1,0,0,0),(1,0,0,1,0,0),(0,0,0,0,1,1)\rangle.$$ 
    
    We can take any non-zero element $v'$ different from $(1,1,1,1,1,1,1)$ in this intersection and we get that $(C+\langle v,v'\rangle, C)$ is a CSS-T pair. Note that for $v'$ equal to $(1,1,0,0,0,0)$, $(1,0,1,0,0,0)$, or $(1,0,0,1,0,0)$, we get a new CSS-T pair. However, we do not obtain a new CSS-T for $v' = (0,0,0,0,1,1)$ since $v' \in C + \langle v\rangle$.
\end{example}

\begin{remark}\label{R:addone}
Note that, if $(C_1, C_2)$ is a CSS-T pair, then so is $(C_1+\la\1\ra, C_2)$. This follows from \rt{equiv-def}~(3), the previous result, and the observation that $C_2\subset\la\1\ra^\perp$, as $C_2$ is even-weighted. 
\end{remark}

Proposition \ref{P:step} also provides the following propagation rule for nondegenerate CSS-T codes. 

\begin{corollary}\label{C:propagation}
Let $(C_1, C_2)$ be a CSS-T pair such that the associated $[[n,k,d]]$ CSS-T code is nondegenerate. For any $y\in C_2^\perp\cap (C_1\star C_2)^\perp$ and $y\not \in C_1$, the pair $(C_1+\langle y\rangle,C_2)$ is a nondegenerate CSS-T pair with parameters
$$
[[n,k+1,d]].
$$
\end{corollary}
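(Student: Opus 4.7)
The plan is to establish three facts for the new pair $(C_1+\langle y\rangle,C_2)$: (i) it is still a CSS-T pair; (ii) $\dim(C_1+\langle y\rangle) = \dim C_1 + 1$, raising the logical dimension from $k$ to $k+1$; and (iii) the minimum distance is still $d$, with the resulting code remaining nondegenerate. The first two are almost immediate from prior results: the hypothesis $y\in C_2^\perp\cap(C_1\star C_2)^\perp$ matches Proposition~\ref{P:step} verbatim, so $(C_1+\langle y\rangle,C_2)$ is a CSS-T pair, and the condition $y\notin C_1$ forces the dimension to increase by exactly one.

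The heart of the argument is (iii). For the upper bound $d'\leq d$, I would invoke monotonicity in the CSS minimum distance formula from Theorem~\ref{T:CSS}: enlarging $C_1$ to $C_1+\langle y\rangle$ gives $C_1\setminus C_2\subseteq(C_1+\langle y\rangle)\setminus C_2$ and, dually (since $(C_1+\langle y\rangle)^\perp\subseteq C_1^\perp$), $C_2^\perp\setminus C_1^\perp\subseteq C_2^\perp\setminus(C_1+\langle y\rangle)^\perp$, so the minimum weights over both difference sets can only decrease. For the lower bound $d'\geq d$, I apply Corollary~\ref{C:dmin2perp} to the new CSS-T pair to obtain
\[
\min\bigl\{\wt(C_1+\langle y\rangle),\wt(C_2^\perp)\bigr\}=\wt(C_2^\perp).
\]
Since the original code is nondegenerate, $\wt(C_2^\perp)=d$. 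Then $\wt\bigl((C_1+\langle y\rangle)\setminus C_2\bigr)\geq\wt(C_1+\langle y\rangle)\geq d$ and $\wt\bigl(C_2^\perp\setminus(C_1+\langle y\rangle)^\perp\bigr)\geq\wt(C_2^\perp)=d$, so $d'\geq d$. Combining yields $d'=d$, and since the new $d^{*}=\min\{\wt(C_1+\langle y\rangle),\wt(C_2^\perp)\}=d$ coincides with $d'$, the new code is nondegenerate.

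I do not anticipate a genuine obstacle; the hypotheses on $y$ were essentially chosen so that Proposition~\ref{P:step} and Corollary~\ref{C:dmin2perp} apply directly. The only place to be careful is the direction of the inequalities: enlarging the ambient code can only lower its minimum weight, whereas restricting to a subset of nonzero codewords can only raise it. One should also briefly note that $(C_1+\langle y\rangle)\setminus C_2$ is nonempty, which follows from $y\in(C_1+\langle y\rangle)\setminus C_1\subseteq(C_1+\langle y\rangle)\setminus C_2$.
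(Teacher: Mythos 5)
Your proof is correct and follows essentially the same route as the paper, which simply cites Proposition~\ref{P:step} for the CSS-T property and Corollary~\ref{C:dmin2perp} for the parameters. Your version is actually more complete: the monotonicity argument showing $d'\leq d$ and the verification that nondegeneracy is preserved are needed to justify the exact parameters $[[n,k+1,d]]$, and the paper leaves these steps implicit.
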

\begin{proof}
By Proposition \ref{P:step}, $(C_1+\langle y\rangle, C_2)$ is a CSS-T pair, and the parameters follow from Corollary \ref{C:dmin2perp}.
\end{proof}

\begin{corollary}\label{C:max-in-C1}
    A pair $(C_1, C_2)\in\cP$ is maximal in $C_1$ if and only if
    $C_1=C_2^\perp\cap (C_1\star C_2)^\perp$. 
\end{corollary}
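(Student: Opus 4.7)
The strategy is to reduce the equivalence to a single inclusion and then apply Proposition~\ref{P:step} in both directions.

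First I would observe that for any CSS-T pair $(C_1,C_2)$ the containment $C_1\subset C_2^\perp\cap(C_1\star C_2)^\perp$ is automatic. Indeed, by Theorem~\ref{T:equiv-def}~(4) we have $C_1^{\star 2}\subset C_2^\perp$, and since $y=y\star y$ for any binary $y$, this gives $C_1\subset C_2^\perp$. The preceding Remark (or a direct dualization of $C_1^{\star 2}\subset C_2^\perp$) gives $C_1\star C_2\subset C_1^\perp$, so $C_1\subset(C_1\star C_2)^\perp$. Therefore the equality $C_1=C_2^\perp\cap(C_1\star C_2)^\perp$ is equivalent to the reverse inclusion $C_2^\perp\cap(C_1\star C_2)^\perp\subset C_1$, and this is what I actually need to match with maximality in $C_1$.

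For the forward direction, I would assume $(C_1,C_2)$ is maximal in $C_1$ and pick any $y\in C_2^\perp\cap(C_1\star C_2)^\perp$. By Proposition~\ref{P:step}, $(C_1+\langle y\rangle,C_2)$ is a CSS-T pair, and maximality in $C_1$ forces $y\in C_1$, giving the desired inclusion.

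For the reverse direction, I would suppose $(C_1,C_2)\leq(C_1',C_2)$ in $\cP$ and argue that every $y\in C_1'$ lies in $C_2^\perp\cap(C_1\star C_2)^\perp$, so by hypothesis $y\in C_1$, giving $C_1'=C_1$. That $y\in C_2^\perp$ comes from $y\in C_1'\subset (C_1')^{\star 2}\subset C_2^\perp$ via Theorem~\ref{T:equiv-def}~(4) applied to $(C_1',C_2)$. For $y\in(C_1\star C_2)^\perp$, I would expand the inner product: for $v\in C_1\subset C_1'$ and $c\in C_2$,
\[
\langle y, v\star c\rangle=\langle y\star v, c\rangle=0,
\]
because $y\star v\in (C_1')^{\star 2}\subset C_2^\perp$. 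Hence $y\perp C_1\star C_2$.

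The only mild subtlety is remembering that the hypothesis only supplies $(C_1\star C_2)^\perp$ rather than $((C_1')\star C_2)^\perp$, so one must write the Schur factor that gets ``enlarged'' as the fixed $v\in C_1$ and let the extra generator $y$ sit in the other factor; Proposition~\ref{P:step} handles the forward direction cleanly, and this small rearrangement handles the reverse, so I do not expect a real obstacle.
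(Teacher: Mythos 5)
Your proof is correct and follows essentially the same route as the paper: both reduce the stated equality to the single inclusion $C_2^\perp\cap(C_1\star C_2)^\perp\subset C_1$ (the reverse inclusion being automatic for any CSS-T pair) and then invoke Proposition~\ref{P:step} to tie that inclusion to maximality in $C_1$. The only cosmetic difference is that, for the direction ``inclusion implies maximal,'' you check membership of an arbitrary $y\in C_1'$ directly via Theorem~\ref{T:equiv-def}~(4) and a Schur-product rearrangement, whereas the paper implicitly reduces to codimension-one extensions (via Corollary~\ref{C:monotone}) and cites Proposition~\ref{P:step} again; both arguments are valid.
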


\begin{proof}
By \rp{step}, $(C_1, C_2)\in\cP$ is maximal in $C_1$ if and only $C_2^\perp\cap (C_1\star C_2)^\perp\subset C_1$. On the other hand, the pair $(C_1+\la y\ra,C_2)$ is CSS-T for each $y\in C_1$, so by \rp{step}, $C_1\subset C_2^\perp\cap (C_1\star C_2)^\perp$ as well.
\end{proof}

We obtain the following theorem by combining the previous results on maximality in $C_1$ and $C_2$.

\begin{theorem}\label{T:maximal}
Let $C_2\subset C_1\subset\mathbb{F}_2^n$ be linear codes. The pair $(C_1, C_2)$ is maximal in $\cP$ if and only if
\begin{enumerate}
    \item[\rm (1)] $C_1^\perp=C_1\star C_2$ and
    \item[\rm (2)] $C_2^\perp=C_1^{\star 2}$.
\end{enumerate}
\end{theorem}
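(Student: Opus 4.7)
The plan is to translate maximality in $\cP$ into the two intersection equalities already established, namely $C_2=C_1\cap(C_1^{\star 2})^\perp$ from Proposition~\ref{P:max-in-C2} and $C_1=C_2^\perp\cap(C_1\star C_2)^\perp$ from Corollary~\ref{C:max-in-C1}, and then pass back and forth across duals. The key algebraic facts I will lean on throughout are: for binary vectors $x\star x=x$, so $C\subset C^{\star 2}$ for every binary code $C$; the inclusion $C_2\subset C_1$ forces both $C_1\star C_2\subset C_1^{\star 2}$ and $C_2\subset C_1\star C_2$ (the latter because each $c\in C_2$ satisfies $c=c\star c$ with $c\in C_1$ and $c\in C_2$); and the standard duality $(A\cap B)^\perp=A^\perp+B^\perp$ for linear codes.

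For the forward direction I would assume $(C_1,C_2)$ is maximal and dualize both intersection identities. Dualizing $C_1=C_2^\perp\cap(C_1\star C_2)^\perp$ gives $C_1^\perp=C_2+C_1\star C_2$; combined with $C_2\subset C_1\star C_2$, this collapses to condition~(1), $C_1^\perp=C_1\star C_2$. Dualizing $C_2=C_1\cap(C_1^{\star 2})^\perp$ gives $C_2^\perp=C_1^\perp+C_1^{\star 2}$; substituting the just-proved (1) and using $C_1\star C_2\subset C_1^{\star 2}$ reduces the right-hand side to $C_1^{\star 2}$, which is condition~(2).

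For the converse I would simply plug (1) and (2) into the intersection criteria. Using (2), $C_1\cap(C_1^{\star 2})^\perp=C_1\cap C_2=C_2$, so Proposition~\ref{P:max-in-C2} yields maximality in $C_2$. Using (1) and the inclusion $C_1\subset C_1^{\star 2}$, we get $C_2^\perp\cap(C_1\star C_2)^\perp=C_1^{\star 2}\cap C_1=C_1$, so Corollary~\ref{C:max-in-C1} yields maximality in $C_1$. Joint maximality is precisely maximality in $\cP$.

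I do not anticipate a serious obstacle: the proof is essentially bookkeeping with duals, sums, and Schur products. The one spot requiring a touch of care is justifying the absorption $C_2\subset C_1\star C_2$ in the forward direction, since it depends on working over $\F_2$ (where Schur squaring is the identity) and on the inclusion $C_2\subset C_1$ provided by the hypothesis that $(C_1,C_2)$ is a CSS-T pair; once that observation is made, both directions reduce to a short chain of equalities.
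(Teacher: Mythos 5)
Your proposal is correct and follows essentially the same route as the paper: both directions reduce to the criteria $C_2=C_1\cap(C_1^{\star 2})^\perp$ and $C_1=C_2^\perp\cap(C_1\star C_2)^\perp$ together with dualization and the inclusions $C_2\subset C_1\star C_2\subset C_1^{\star 2}$. The only (harmless) difference is that you justify $C_2\subset C_1\star C_2$ directly from $c=c\star c$ with $c\in C_2\subset C_1$, whereas the paper first invokes Remark~\ref{R:addone} to assume $\1\in C_1$; your version is slightly more self-contained.
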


\begin{proof}
Assume $(C_1, C_2)$ is a maximal CSS-T pair. Note that we can assume $\1 \in C_1$ by Remark \ref{R:addone}, and we have $C_2=\la\1\ra\star C_2\subset C_1\star C_2$. Now, by \rc{max-in-C1}, we have 
    $$C_1^\perp=C_2+C_1\star C_2=C_1\star C_2,$$
which shows (1).   

As $C_2=C_1\cap (C_1^{\star 2})^\perp$ by \rp{max-in-C2}, we only need to show that $(C_1^{\star 2})^\perp \subset C_1$ in order to prove (2). Since $C_2\subset C_1$, we have $C_1\star C_2\subset C_1^{\star 2}$ and $(C_1^{\star 2})^\perp\subset (C_1\star C_2)^\perp$. Also, $C_2\subset C_1\subset C_1^{\star 2}$ implies that $(C_1^{\star 2})^\perp\subset C_2^\perp$. Therefore, by \rc{max-in-C1}, we get
$$(C_1^{\star 2})^\perp\subset C_2^\perp\cap (C_1\star C_2)^\perp= C_1.$$

Theorem \ref{T:equiv-def} (2) implies that $(C_1,C_2)$ is a CSS-T pair. The maximality follows directly from \rp{max-in-C2} and \rc{max-in-C1}, using both (1) and (2).
\end{proof}

The following example illustrates that the necessary condition (2) of \rt{maximal} for $(C_1, C_2)$ to be maximal is not sufficient.

\begin{example}\label{ex.hoy}
Define $C_2 := \la (1, 1, 0,0,0,0)\ra$ and
$C_1$ as the code whose generator matrix is given by
$$\left(
\begin{matrix}
1 & 1 & 0 & 0 & 0 & 0\\
0 & 0 & 1 & 1 & 1 & 0\\
0 & 0 & 0 & 1 & 0 & 1\\
0 & 0 & 0 & 0 & 1 & 1
\end{matrix}\right).$$
It is not difficult to see using \cite{cod_package, Mac2} that a generator matrix for $C_1^{\star 2}$ is given by
$$\left(
\begin{matrix}
1 & 1 & 0 & 0 & 0 & 0\\
0 & 0 & 1 & 0 & 0 & 0\\
0 & 0 & 0 & 1 & 0 & 0\\
0 & 0 & 0 & 0 & 1 & 0\\
0 & 0 & 0 & 0 & 0 & 1
\end{matrix}\right).$$
Hence, $(C_1^{\star 2})^\perp=\la (1, 1, 0,0,0,0)\ra=C_2$, meaning that the pair
$(C_1, C_2)$ satisfies condition (2) of \rt{maximal}. But the pair $(C_1, C_2)$ is not maximal in $C_1$ because the extension $(C_1+\la\1\ra,C_2)$ satisfies (1)--(2) of \rt{maximal}, meaning that it is maximal.
\end{example}

In the following Corollary, we collect special cases when the conditions of \rt{maximal} can be relaxed.

\begin{corollary}\label{C:special}
Let $C$ be a binary code.
\begin{enumerate}
\item[\rm (1)] The pair $(C,C)$ is maximal in $\cP$ if and only if $C^{\star 2}=C^\perp$.
\item[\rm (2)] If $C^\perp\subset C$, the pair $(C,C^\perp)$ is maximal in $\cP$ if and only if $C^{\star 2}=C$. Equivalently, $C$ is generated by vectors with pair-wise disjoint support.
\end{enumerate}
\end{corollary}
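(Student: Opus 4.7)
The plan is to derive both parts as direct corollaries of \rt{maximal} by specializing the choices of $C_1$ and $C_2$.

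For part~(1), setting $C_1 = C_2 = C$ in \rt{maximal} makes both conditions collapse to the single equation $C^\perp = C^{\star 2}$. I then check that this equation already forces $(C,C)$ to be a CSS-T pair: $C \subset C^{\star 2} = C^\perp$ makes $C$ self-orthogonal (hence even-weighted in the binary setting), and the criterion of \rt{equiv-def}~(3) becomes $C \subset C \cap (C^{\star 2})^\perp = C$, which holds trivially.

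For part~(2), setting $C_1 = C$ and $C_2 = C^\perp$ in \rt{maximal}, the forward direction is immediate from condition~(2): $C = (C^\perp)^\perp = C^{\star 2}$. For the reverse direction, assuming $C^{\star 2} = C$, the pair $(C, C^\perp)$ is CSS-T by \rt{equiv-def}~(3) since $C^\perp \subset C \cap C^\perp = C^\perp$ using $C^\perp \subset C$. Condition~(2) of \rt{maximal} is then the hypothesis. For condition~(1), the equality $C^\perp = C \star C^\perp$ splits into two inclusions: $C \star C^\perp \subset C^\perp$ follows from the usual duality (for $a, c \in C$ and $b \in C^\perp$, $\la a \star b, c \ra = \la a \star c, b \ra = 0$ because $a \star c \in C^{\star 2} = C$), and $C^\perp \subset C \star C^\perp$ is immediate since every $w \in C^\perp$ also lies in $C$ by hypothesis, so $w = w \star w \in C \star C^\perp$.

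It remains to prove the ``equivalently'' reformulation: $C^{\star 2} = C$ if and only if $C$ has a generating set of vectors with pairwise disjoint supports. The ``if'' direction is a direct computation, since $v_i \star v_j = 0$ for disjoint-support generators and $v_i \star v_i = v_i$. For the converse, my plan is to construct a canonical basis: on $\supp(C) := \bigcup_{v \in C} \supp(v)$ define the equivalence relation $i \sim j$ if and only if $v_i = v_j$ for all $v \in C$; for each equivalence class $E$, take $u_E$ to be the Schur product of all $v \in C$ with $v_i = 1$ at some $i \in E$. Closure under $\star$ places $u_E$ in $C$, and one shows $u_E = \chi_E$; these characteristic vectors then span $C$ since every $v \in C$ is constant on each equivalence class. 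The step I expect to demand the most care is the identification $u_E = \chi_E$, specifically ruling out $(u_E)_j = 1$ for $j \notin E$; this is handled by picking a witness $w \in C$ with $w_i = 0$ and $w_j = 1$ (which exists because $i \not\sim j$) and observing that $u_E + w$ would then belong to the family defining $u_E$ yet have $j$-coordinate $0$, contradicting $(u_E)_j = 1$. All remaining steps are short deductions from \rt{equiv-def} and \rt{maximal}.
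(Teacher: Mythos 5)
Your proofs of the two numbered equivalences are correct and follow essentially the same route as the paper: both directions of (1) and the forward direction of (2) are read off from Theorem~\ref{T:maximal}, and the CSS-T property is checked via Theorem~\ref{T:equiv-def}~(3). The only divergence is in the reverse direction of (2): the paper gets maximality in $C$ from Proposition~\ref{P:step} (if $(C+\la y\ra,C^\perp)$ is CSS-T then $y\in (C^\perp)^\perp\cap(C\star C^\perp)^\perp\subset C$) and maximality in $C^\perp$ from Proposition~\ref{P:max-in-C2}, whereas you verify both conditions of Theorem~\ref{T:maximal} directly, including a hands-on proof that $C\star C^\perp=C^\perp$ via the identity $\la a\star b,c\ra=\la a\star c,b\ra$; both are equally short and valid. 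The genuine value you add is a proof of the ``equivalently'' clause, which the paper states without argument. Your argument there is sound: closure of $C$ under $\star$ follows from $C^{\star 2}=C$, the products $u_E$ lie in $C$, and the characteristic vectors $\chi_E$ of the equivalence classes span $C$. One step is stated slightly too quickly: $i\not\sim j$ only gives a $v\in C$ with $v_i\neq v_j$, not immediately a witness with $w_i=0$, $w_j=1$. This is easily repaired --- if $j\notin\supp(C)$ then $(u_E)_j=0$ trivially; if $v_i=1$, $v_j=0$ then $v$ itself is a factor of $u_E$ killing coordinate $j$; and if only vectors with $v_i=1$, $v_j=0$ and $u_i=u_j=1$ occur, then $v+u$ is the desired witness --- but you should record this case split.
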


\begin{proof}
(1) If the pair $(C,C)$ is maximal in $\cP$, then $C^{\star 2}=C^\perp$ by Theorem~\ref{T:maximal}~(2).
If $C^{\star 2}=C^\perp$, then $(C, C)$ is a CSS-T pair by \rt{equiv-def}~(3). Also, the pair $(C, C)$ is maximal in $\cP$ by Theorem~\ref{T:maximal}.

(2) 
If $(C,C^\perp)$ is a maximal CSS-T pair, then $C=C^{\star 2}$ by \rt{maximal}~(2). Conversely, assume that $C=C^{\star 2}$. \rt{equiv-def}~(3) verifies that $(C, C^\perp)$ is a CSS-T pair. \rp{max-in-C2} verifies that $(C, C^\perp)$ is maximal in $C^\perp$. If $(C +\la y\ra, C^\perp)$ is a CSS-T pair for some $y\in\F_2^n$, then $y\in C$ by \rp{step}, meaning that $(C, C^\perp)$ is maximal in $C$.
\end{proof}

\begin{example}
Assume $3d=m-1$ for some $d,m\in\N$. For the binary Reed-Muller code $C := \RM_m(d)$, we have 
    $$C^\perp=\RM_m(d)^\perp=\RM_m(m-d-1)=\RM_m(2d)=C^{\star 2}.$$
Thus, $(C, C)$ is a maximal pair by \rc{special}~(1).
\end{example}

Observe that even if $(C_1,C_2)$ is maximal in $\mathcal{P}$, in principle, there can be a pair $(D_1,D_2)\in\mathcal{P}$ such that $C_2\subset D_2$ or $C_1\subset D_1$. We can give a complete characterization of such spaces. First we need a lemma.

\begin{lemma}\label{L:l}
    Let $C\subsetneq\mathbb{F}_2^n$ such that for any $x\in C\cap(C^{\star 2})^\perp$ we have $C\star x=C^\perp$. Then $(C^{\star 2})^\perp=\langle y\rangle$, for some $y\in C$, or $C=C^\perp$ and $C^{\star 2}=\langle\1\rangle^\perp$.
\end{lemma}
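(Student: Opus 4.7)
The plan is to prove the lemma by first bounding $\dim(C\cap(C^{\star 2})^\perp)$ using the hypothesis, and then, in the one-dimensional case, showing that the generator already spans all of $(C^{\star 2})^\perp$. The key observation is that $C\star x$ is always supported on $\supp(x)$, so the hypothesis $C\star x = C^\perp$ forces $\supp(C^\perp)\subseteq\supp(x)$ (where $\supp(C^\perp)$ denotes the union of supports of elements of $C^\perp$) for every non-zero $x\in C\cap(C^{\star 2})^\perp$. If two linearly independent elements $x_1,x_2$ existed in this intersection, applying the same inclusion also to the non-zero sum $x_1+x_2$ would give
\[
\supp(C^\perp)\subseteq\supp(x_1)\cap\supp(x_2)\cap(\supp(x_1)\triangle\supp(x_2))=\emptyset,
\]
forcing $C^\perp=\{0\}$ and contradicting $C\subsetneq\mathbb{F}_2^n$. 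Hence $\dim(C\cap(C^{\star 2})^\perp)\leq 1$.

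Assume now that the intersection equals $\langle y\rangle$ for a non-zero $y$; set $S:=\supp(y)$ and $T:=[n]\setminus S$, and let $V_S,V_T\subset \mathbb{F}_2^n$ denote the subcodes supported on $S$ and $T$, respectively. The hypothesis $C\star y=C^\perp$ gives $C^\perp\subseteq V_S$, which dualizes to $V_T\subseteq C$. With this inclusion, the puncturing $C^T$ identifies naturally with $\{c\in C:\supp(c)\subseteq S\}$, and $C\star y$ equals this restriction re-embedded into $V_S$. Combining $C^\perp=C\star y$ with the fact that the dual of a puncturing equals the shortening of the dual, a dimension count then shows that $C^T$ is self-dual inside $\mathbb{F}_2^S$.

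Finally, to prove $(C^{\star 2})^\perp\subseteq\langle y\rangle$, take $z\in(C^{\star 2})^\perp$. For any $v\in V_T\subseteq C$ and any $c\in C$, the identity $0=z\cdot(v\star c)=(z\star v)\cdot c$ forces $z\star v\in C^\perp\cap V_T=\{0\}$, and letting $v$ range over the standard basis of $V_T$ gives $\supp(z)\subseteq S$. Then the restriction of $z$ to $S$ lies in $((C^T)^{\star 2})^\perp\subseteq (C^T)^\perp=C^T$ by self-duality, and $V_T\subseteq C$ lifts this back to $z\in C$; hence $z\in C\cap(C^{\star 2})^\perp=\langle y\rangle$, proving $(C^{\star 2})^\perp=\langle y\rangle$. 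The second alternative of the statement appears as the sub-case $y=\1$: this happens exactly when $\1\in C$ (so $C$ is self-orthogonal since $\1\in(C^{\star 2})^\perp$), and then $C=C\star\1=C^\perp$ together with $(C^{\star 2})^\perp=\langle\1\rangle$ yield $C=C^\perp$ and $C^{\star 2}=\langle\1\rangle^\perp$. I expect the main obstacle to be precisely this last transfer: combining $V_T\subseteq C$, the self-duality of $C^T$, and the pairing of $(C^{\star 2})^\perp$ against $V_T\star C$ to force every $z\in(C^{\star 2})^\perp$ into $C$ is the step where all structural inputs must be deployed at once.
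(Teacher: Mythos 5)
Your proof is correct, but the two halves relate to the paper's argument differently. The paper begins with a one-line observation you never make: for a nonzero $x$ in the intersection, $C^\perp=C\star x\subseteq C^{\star 2}$ (because $x\in C$), hence $(C^{\star 2})^\perp\subseteq C$ and $C\cap(C^{\star 2})^\perp=(C^{\star 2})^\perp$. It then shows $(C^{\star 2})^\perp$ has a unique minimal-support codeword: a second one $x$ would give $e_i\in C\subseteq C^{\star 2}$ for some $i\in\supp(y)\setminus\supp(x)$, forcing the $i$-th coordinate of everything in $(C^{\star 2})^\perp$ to vanish. Your Step~1 --- intersecting $\supp(x_1)$, $\supp(x_2)$ and $\supp(x_1+x_2)$ to conclude $\supp(C^\perp)=\emptyset$ --- is a clean alternative to that minimal-codeword argument and gives $\dim\bigl(C\cap(C^{\star 2})^\perp\bigr)\le 1$ directly. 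Where you genuinely diverge is in upgrading this to a statement about $(C^{\star 2})^\perp$ itself: lacking the inclusion $(C^{\star 2})^\perp\subseteq C$, you prove $(C^{\star 2})^\perp\subseteq\langle y\rangle$ the hard way, via $V_T\subseteq C$, self-duality of the punctured code $C^T$, and the pairing $z\cdot(v\star c)=(z\star v)\cdot c$. Every step of that checks out (and the self-duality of $C^T$ is a nice structural byproduct), but it is dispensable: once you note $C^\perp=C\star y\subseteq C^{\star 2}$, your Step~1 alone finishes the lemma in a few lines. Two minor slips worth fixing: the phrase ``exactly when $\1\in C$'' overstates the sub-case $y=\1$, which requires $\1\in C\cap(C^{\star 2})^\perp$ and not merely $\1\in C$ (only the direction you actually use is needed); and, like the paper, you silently assume the intersection contains a nonzero vector, which is the only place the hypothesis is ever invoked.
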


\begin{proof}
    First observe that $C\star x = C^\perp\subset C^{\star 2}$ implies $(C^{\star 2})^\perp\subset C$ and thus $C\cap (C^{\star 2})^\perp=(C^{\star 2})^\perp$. Let $y\in(C^{\star 2})^\perp$ be a minimal support codeword. If $y=\1$, then $C\star y=C=C^\perp$ and $C^{\star 2}=\langle \1\rangle^\perp$.

    Assume now that $\wt(y)<n$. Since $C\star y=C^\perp$ then $\langle e_i\ :\ i\notin\mathrm{supp}(y)\rangle\subseteq C$. If there is another minimal codeword $y\neq x\in(C^{\star 2})^\perp$, the same arguments lead to the existence of $i\in \mathrm{supp}(y)\setminus\mathrm{supp}(x)$ such that $e_i\in C^{\star 2}$ and thus $z_i=0$ for any $z\in(C^{\star 2})^\perp$, which contradicts that $y_i\neq 0$. Thus, there are no more minimal codewords in $(C^{\star 2})^\perp$ and we have the conclusion.
\end{proof}

The next example shows that the converse of the last lemma is not true.

\begin{example}\rm
    Let $C=\langle (1,1,0,0,0),(0,1,1,0,0),(0,0,0,1,1)\rangle$. We have $$C^{\star 2}=\langle (1,0,0,0,0),(0,1,0,0,0),(0,0,1,0,0),(0,0,0,1,1)\rangle,$$ and $(C^{\star 2})^\perp=\langle (0,0,0,1,1)\rangle$. However,

    $$C\star (0,0,0,1,1)=(0,0,0,1,1)\subsetneq C^\perp=\langle (1,1,1,0,0),(0,0,0,1,1)\rangle.$$
\end{example}

\begin{proposition}
    Let $(C_1,C_2)\in\mathcal{P}$. Then
    \begin{enumerate}
        \item There is no $(D_1,D_2)\in\mathcal{P}$ with $C_1\subsetneq D_1$ if and only if $C_1^\perp=C_1\star y$ for any $y\in C_1\cap (C_1^{\star 2})^\perp$.
        \item There is no $(D_1,D_2)\in\mathcal{P}$ with $C_2\subsetneq D_2$ if and only if $(C_2,C_2)$ is maximal.
    \end{enumerate}
\end{proposition}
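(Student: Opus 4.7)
My plan is to prove (1) and (2) independently, using Propositions \ref{P:step} and \ref{P:max-in-C2}, Corollary \ref{C:special}, and Lemma \ref{L:l}.

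For (1), the forward direction goes by Proposition \ref{P:step} applied to the pair $(C_1,\langle y\rangle)\in\mathcal{P}$ for each $y\in C_1\cap (C_1^{\star 2})^\perp$. Since $y\star y=y$ and $y\in C_1$, one has $\langle y\rangle\subset C_1\star y$, so the admissible-extension set $\langle y\rangle^\perp\cap (C_1\star y)^\perp$ simplifies to $(C_1\star y)^\perp$. The absence of any proper extension in the first component forces $(C_1\star y)^\perp\subset C_1$, equivalently $C_1^\perp\subset C_1\star y$; combined with the automatic inclusion $C_1\star y\subset C_1^\perp$ coming from $y\in (C_1^{\star 2})^\perp$, we get $C_1\star y=C_1^\perp$. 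The reverse direction matches exactly the hypothesis of Lemma \ref{L:l}, which yields either $(C_1^{\star 2})^\perp=\langle y\rangle$ for some $y\in C_1$, or $C_1=C_1^\perp$ with $C_1^{\star 2}=\langle\1\rangle^\perp$ (in which case $(C_1^{\star 2})^\perp=\langle\1\rangle$ and $\1\in C_1$). In either case, given $(D_1,D_2)\in\mathcal{P}$ with $C_1\subsetneq D_1$, we have $D_2\subset (D_1^{\star 2})^\perp\subset (C_1^{\star 2})^\perp$, so $D_2$ is the line through $y$ (respectively through $\1$). Pick $z\in D_1\setminus C_1$; applying Proposition \ref{P:step} to $(C_1+\langle z\rangle,\langle y\rangle)\in\mathcal{P}$, the hypothesis $C_1\star y=C_1^\perp$ forces $z\in(C_1\star y)^\perp=C_1$, a contradiction.

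For (2), the reverse direction is the cleaner half. By Corollary \ref{C:special} (1), $(C_2,C_2)$ being maximal is equivalent to $C_2^{\star 2}=C_2^\perp$. Given $(D_1,D_2)\in\mathcal{P}$ with $D_2\supset C_2$, I chain
\[
D_2\subset D_2^\perp\subset C_2^\perp=C_2^{\star 2}\subset D_2^{\star 2}\subset D_1^{\star 2}\subset D_2^\perp,
\]
collapsing all inclusions to equalities; in particular $D_2^\perp=C_2^\perp$, so $D_2=C_2$. For the forward direction I argue by contrapositive: if $(C_2,C_2)$ is not maximal then $C_2^{\star 2}\subsetneq C_2^\perp$, hence $(C_2^{\star 2})^\perp\supsetneq C_2$. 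The plan is to produce an even-weighted $y\in (C_2^{\star 2})^\perp\setminus C_2$, in which case $D:=C_2+\langle y\rangle$ gives $(D,D)\in\mathcal{P}$ with $D\supsetneq C_2$, violating the hypothesis; the verification reduces to the routine conditions that $y$ be even, $y\in C_2^\perp$, and $y\in (C_2^{\star 2})^\perp$, all of which are either direct or follow from $C_2\subset C_2^{\star 2}$.

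The main obstacle is the forward direction of (2): the inclusions $y\in C_2^\perp$ and $y\notin C_2$ are immediate from $(C_2^{\star 2})^\perp\supsetneq C_2$, but guaranteeing that $y$ has even weight is more delicate, since a priori every element of $(C_2^{\star 2})^\perp\setminus C_2$ could be odd-weighted. Resolving this will rely on the structure supplied by the given pair $(C_1,C_2)\in\mathcal{P}$, in particular the inclusion $C_1^{\star 2}\subset C_2^\perp$, which should allow selecting $y$ inside the even-weight subspace $\langle\1\rangle^\perp$. The reverse direction of (2) and both directions of (1) go through cleanly along the lines sketched.
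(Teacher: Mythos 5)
Your part (1) is correct and takes essentially the paper's route: the forward direction applies Proposition~\ref{P:step} (equivalently Corollary~\ref{C:max-in-C1}) to the pair $(C_1,\langle y\rangle)$ for each $y\in C_1\cap(C_1^{\star 2})^\perp$, and the reverse direction runs through Lemma~\ref{L:l}; your endgame, contradicting the existence of $z\in D_1\setminus C_1$ directly via Proposition~\ref{P:step} applied to $(C_1+\langle z\rangle,\langle y\rangle)$, is a clean variant of the paper's comparison of $D_1^\perp$ with $C_1^\perp$. The reverse direction of (2) is also correct, and your inclusion chain is a tidier argument than the paper's.

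The forward direction of (2), however, is not a routine verification you can defer: the obstacle you flag --- producing an \emph{even-weight} $y\in(C_2^{\star 2})^\perp\setminus C_2$ --- is genuine and cannot be overcome. What $(D,D)\in\cP$ for $D=C_2+\langle y\rangle$ actually requires is $y\in(C_2^{\star 2})^\perp\cap\langle\1\rangle^\perp$, while the hypothesis ``$(C_2,C_2)$ not maximal'' only yields $(C_2^{\star 2})^\perp\supsetneq C_2$ via Corollary~\ref{C:special}; nothing forces the difference to meet the even-weight hyperplane, and the ambient pair $(C_1,C_2)$ gives no extra leverage since one may have $C_1=C_2$. Concretely, take $n=3$ and $C_1=C_2=\langle(1,1,0)\rangle$, a minimal element of $\cP$. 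Every second component of a CSS-T pair is self-orthogonal by Theorem~\ref{T:equiv-def}, and $\F_2^3$ contains no self-orthogonal code of dimension $2$, so there is no $(D_1,D_2)\in\cP$ with $C_2\subsetneq D_2$; yet $C_2^{\star 2}=\langle(1,1,0)\rangle\neq C_2^\perp=\langle(1,1,0),(0,0,1)\rangle$, so $(C_2,C_2)$ is not maximal (indeed $(\langle(1,1,0),(0,0,1)\rangle,C_2)\in\cP$ extends it in the first component). Hence the forward implication of (2) fails as stated. The paper's own proof elides exactly the point you identified: from $y\in(C_2^{\star 2})^\perp\setminus C_2$ it produces the pair $(C_2+\langle y\rangle,C_2)\in\cP$, which enlarges the \emph{first} component, and silently treats this as an enlargement of the second; upgrading it to $(C_2+\langle y\rangle,C_2+\langle y\rangle)$ needs precisely the evenness of $y$. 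The correct criterion for ``no $(D_1,D_2)\in\cP$ with $C_2\subsetneq D_2$'' is $(C_2^{\star 2})^\perp\cap\langle\1\rangle^\perp=C_2$, which is strictly weaker than maximality of $(C_2,C_2)$.
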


\begin{proof}
    If there is no such $D_1$, since for any $y\in C_1\cap (C_1^{\star 2})^\perp$, $(C_1,\langle y\rangle)\in\mathcal{P}$ but $C_1$ cannot be extended, then $C_1=\langle y\rangle^\perp\cap (C_1\star y)^\perp=(C_1\star y)^\perp$ by Corollary \ref{C:max-in-C1} (note that $y \in C_1$ implies $y\in C_1\star y$). On the other hand, assume $C_1^\perp=C_1\star y$ for any $y\in C_1\cap (C_1^{\star 2})^\perp$, and let $C_1\subset D_1$ such that $D_1$ is the largest code containing $C_1$ with $(D_1,D)\in\mathcal{P}$ for some $D$. By the first part of this proof, the hypothesis and Lemma \ref{L:l} we have $(D_1^{\star 2})^\perp\subseteq(C_1^{\star 2})^\perp=\langle y\rangle$ for some $y\in C_1$. This implies $D_1\cap (D_1^{\star 2})^\perp =\langle y\rangle$ because $(D_1,D)\in \mathcal{P}$. By the choice of $D_1$ and the first part of the proof, $D_1\star y =D_1^\perp$, and we also have $C_1\star y=C_1^\perp$. Thus,
    
    $$C_1\star y\subset D_1\star y = D_1^\perp \subset C_1^\perp\Rightarrow D_1^\perp=C_1^\perp,$$

    \noindent and we get $D_1=C_1$. 

    To prove (2), observe that $(D_1,D_2)\in\mathcal{P}$ is such that $C_2\subset D_2$ if and only if there is $y\notin C_2$ such that $(C_2+\langle y\rangle,C_2)\in\mathcal{P}$ by Corollary \ref{C:monotone}. This happens if and only if $y\in (C_2^\perp \cap (C_2^{\star 2})^\perp)\setminus C_2$ by Proposition \ref{P:step}. However, $(C_2^{\star 2})^\perp\subset C_2^\perp$ and thus, $y\in (C_2^{\star 2})^\perp \setminus C_2$. If there is not such $y$, it means that $(C_2^{\star 2})^\perp = C_2$ and by Corollary \ref{C:special} we have the conclusion.
\end{proof}

\begin{example}\rm
    Let $$G=\begin{pmatrix} 1&1&1&1&1&1&1&1\\
    1&1&1&1&0&0&0&0\\
    1&1&0&0&1&1&0&0\\
    1&0&1&0&1&0&1&0\end{pmatrix}$$

    and $C$ be the code generated by $G$. We can check that $C^{\star 2}=\langle\1\rangle^\perp$, $C=C^\perp$ and thus, $(C,\langle \1\rangle)\in\mathcal{P}$ and there is no other CSS-T pair $(D_1,D_2)$ with $C_1\subsetneq D_1$.
\end{example}

\begin{corollary}
    If $(C_1,C_2)\in\mathcal{P}$ and there is no $D_1\supsetneq C_1$ and $D_2$ such that $(D_1,D_2)\in\mathcal{P}$, then for some $y\in C_1$, $C_2=\langle y\rangle$ and $(C_1,C_2)$ is maximal.
\end{corollary}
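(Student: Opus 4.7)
The plan is to apply the preceding Proposition (part (1)) to translate the non-extendability hypothesis on $C_1$ into the algebraic condition that $C_1^\perp = C_1 \star y$ for every $y \in C_1 \cap (C_1^{\star 2})^\perp$, which is precisely the hypothesis of Lemma~\ref{L:l} with $C = C_1$. Before invoking that lemma, I would check the ambient condition $C_1 \subsetneq \F_2^n$: since $(C_1, C_2) \in \cP$ with $C_2 \neq \{0\}$, Theorem~\ref{T:equiv-def}~(4) gives $C_1 \subset C_1^{\star 2} \subset C_2^\perp$, and $C_2^\perp \subsetneq \F_2^n$ because $C_2$ is nontrivial.

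Next, Lemma~\ref{L:l} produces two alternatives: either $(C_1^{\star 2})^\perp = \la y \ra$ for some $y \in C_1$, or $C_1 = C_1^\perp$ with $C_1^{\star 2} = \la \1 \ra^\perp$. I would unify these into a single statement. In the second case, $(C_1^{\star 2})^\perp = \la \1 \ra$, and because $C_1 = C_1^\perp$ is self-dual every codeword has even weight, so $\1 \cdot c = 0$ for all $c \in C_1$, giving $\1 \in C_1^\perp = C_1$. Thus in both cases there exists $y \in C_1$ with $(C_1^{\star 2})^\perp = \la y \ra \subset C_1$, and hence $C_1 \cap (C_1^{\star 2})^\perp = \la y \ra$.

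Finally, by Theorem~\ref{T:equiv-def}~(3) we have $C_2 \subset C_1 \cap (C_1^{\star 2})^\perp = \la y \ra$, and since $C_2$ is nontrivial this forces $C_2 = \la y \ra$ with $y \in C_1$. Maximality then splits into two pieces: maximality in $C_2$ follows from Proposition~\ref{P:max-in-C2} since we have shown $C_2 = C_1 \cap (C_1^{\star 2})^\perp$; maximality in $C_1$ is immediate from the hypothesis, as any proper enlargement $C_1 \subsetneq D_1$ is by assumption incompatible with every choice of $D_2$, and in particular with $D_2 = C_2$.

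The only genuinely delicate step is treating the degenerate branch of Lemma~\ref{L:l} (the self-dual case with $(C_1^{\star 2})^\perp = \la \1 \ra$); the remaining work is straightforward bookkeeping with Theorem~\ref{T:equiv-def}, Proposition~\ref{P:max-in-C2}, and the characterization given in part~(1) of the preceding Proposition.
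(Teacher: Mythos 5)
Your proof is correct. The paper states this corollary without proof, and your argument is exactly the intended route: the non-extendability hypothesis feeds Proposition (1) into Lemma~\ref{L:l}, and your handling of the self-dual branch (showing $\1\in C_1$ so that both alternatives yield $C_1\cap(C_1^{\star 2})^\perp=\la y\ra$ with $y\in C_1$) correctly closes the one genuinely delicate case before Theorem~\ref{T:equiv-def}~(3), Proposition~\ref{P:max-in-C2}, and the hypothesis itself deliver $C_2=\la y\ra$ and maximality.
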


\section{Cyclic codes}
\label{S:cyclic}

We now illustrate the results from the previous sections using cyclic codes (and extended cyclic codes). We will review cyclic codes over $\F_q$, but note that we restrict to the case $q=2$ whenever we refer to CSS-T codes.

Take an integer $s>1$ and consider the field extension $\mathbb{F}_{q^s} / \mathbb{F}_q$. We set $n$ with $n\mid q^s-1$ and $g\in \mathbb{F}_q[x]$ such that $g$ divides $x^n-1$. We denote by $C_g$ the cyclic code with $g$ as its generator polynomial. Let $\beta\in\F_{q^s}$ be a primitive $n$-th root of unity. For the set $\zn:=\mathbb{Z}/n\mathbb{Z}$, we will consider the representatives between $1$ and $n$, i.e., $\zn=\{1,2,\dots,n\}$. 

\begin{definition}
The \textit{defining set} is given by $J := \{j\in \zn : g(\beta^j)=0\}$ and the \textit{generating set} by $I:=\{i\in \zn : g(\beta^i)\neq 0\}$.
\end{definition}

Note that $J=[n]\setminus I$, and 
$$
g=\prod_{j\in J}(x-\beta^j)=\frac{x^n-1}{\prod_{i\in I}(x-\beta^i)}.
$$

Define $-I:=\{n-i : i \in I\}\subset \zn.$ Let $\mathcal{M}\subset \mathbb{Z}_{\geq 0}$ be a finite set. We consider the $\F_{q^s}$-linear subspace
$$
\mathcal{L}(\mathcal{M}):=\langle x^i : i\in \mathcal{M} \rangle \subset  \mathbb{F}_{q^s}[x].
$$
Take a set of points $X=\{P_1,\dots,P_{\abs{X}}\}\subset \F_{q^s}$. We can define the following evaluation map associated to $X$:
$$
\begin{array}{cccl}
\ev_X \colon &\F_{q^s}[x]&\to& \F_{q^s}^{\abs{X}} \\
&f &\mapsto& \left(f(P_1),\dots,f(P_{\abs{X}})\right).
\end{array}
$$
Let $X_n:=\{1,\beta,\dots,\beta^{n-1}\}$, i.e., $X_n$ is the zero locus of $x^{n}-1$ in $\F_{q^s}$. We now consider the associated evaluation code

$$
B(\mathcal{M}):=\ev_{X_n}(\mathcal{L}(\mathcal{M}))=\{(f(1),f(\beta),\dots,f(\beta^{n-1})) : f\in  \mathcal{L}(\mathcal{M})\}\subset \mathbb{F}_{q^s}^n,
$$
and we define
$$
C(I):=B(-I)\cap \mathbb{F}_{q}^n.
$$
From \cite{bierbrauercyclic}, we obtain that $C_g=C(I)$, i.e., we have a description of cyclic codes in terms of subfield subcodes of evaluation codes. 

The definitions clearly show that $J$ and $I$ are closed under multiplication by $q$, which leads to the following definition.

\begin{definition}
Given a subset $I\subset \zn$, denote $q\cdot I : =\{q\cdot i : i \in I\}$. We say that $I$ is a \textit{cyclotomic coset} if $I=q\cdot I$. Let $a\in \zn$, the set $\II_a:=\{q^j \cdot a : j\geq 0 \}\subset \zn$ is the \textit{minimal cyclotomic coset} associated to $a$. 
\end{definition}

\begin{example}\label{EX:cyc1}
Let $q=2$, $s=4$, and $n=15$. Then, the minimal cyclotomic cosets are
$$
\II_1=\{1,2,4,8\},\; \II_3=\{3,6,12,9\},\; \II_5=\{5,10\}, \; \II_7=\{7,14,13,11\},\; \II_{15}=\{15\}.
$$
\end{example}

From \cite{bierbrauercyclic}, we have the following result about the dual of a cyclic code. 

\begin{theorem}\label{T:dualcyclic}
Let $I\subset \zn$ be a cyclotomic coset. We have that
$$
C(I)^\perp=C(-J).
$$
\end{theorem}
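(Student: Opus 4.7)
The plan is to reduce Theorem~\ref{T:dualcyclic} to the classical statement about the dual of a cyclic code in the polynomial model, after reconciling the evaluation-code description $C(I)=B(-I)\cap \F_q^n$ with the generator-polynomial description.

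First I would check that $C(I)$ coincides with the classical cyclic code $C_g$ whose generator polynomial is $g(x)=\prod_{j\in J}(x-\beta^j)$. For $c=(c_0,\dots,c_{n-1})\in\F_q^n$, set $\hat c(j):=\sum_{i=0}^{n-1} c_i\beta^{ij}$, which equals $c(\beta^j)$ when we identify $c$ with the polynomial $c(x)=\sum_i c_i x^i$. If $c_i=f(\beta^i)$ for some $f=\sum_{k\in -I}a_k x^k\in\mathcal{L}(-I)$, then swapping summations yields
\begin{equation*}
\hat c(j)=\sum_{k\in -I} a_k \sum_{i=0}^{n-1} \beta^{i(k+j)},
\end{equation*}
and since the inner geometric sum vanishes unless $k+j\equiv 0\pmod n$, we see that $\hat c$ is supported on $I$. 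Conversely, any $c$ whose DFT is supported on $I$ is recovered by the inverse transform from some $f\in\mathcal{L}(-I)$. So $B(-I)\cap\F_q^n$ is exactly the set of $c\in\F_q^n$ with $c(\beta^j)=0$ for all $j\in J$, that is, with $g\mid c(x)$ in $\F_q[x]/(x^n-1)$, which is $C_g$.

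Next I would invoke the classical dual-cyclic-code theorem. The check polynomial of $C_g$ is $h(x)=(x^n-1)/g(x)=\prod_{i\in I}(x-\beta^i)$, and the standard fact is that $C_g^\perp$ is the cyclic code generated (up to scaling) by the reciprocal $\bar h(x):=x^{\deg h}h(x^{-1})$. The roots of $\bar h$ are $\{\beta^{-i}:i\in I\}=\{\beta^j:j\in -I\}$, so the defining set of $C_g^\perp$ is $-I$ and its generating set is $\zn\setminus(-I)=-J$. By the correspondence established in the first step, this gives $C_g^\perp=C(-J)$.

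The only additional item to verify is that $-J$ is itself a cyclotomic coset, so that the notation $C(-J)$ is consistent with the setup. This is automatic: multiplication by $q$ commutes with negation on $\zn$, so $qI=I$ forces $qJ=\zn\setminus qI=J$ and then $q(-J)=-qJ=-J$. I do not anticipate a substantial obstacle beyond these bookkeeping steps translating between the evaluation and polynomial presentations of the same cyclic code.
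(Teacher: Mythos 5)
Your proof is correct, but it takes a different route from the paper, which does not prove the theorem from scratch: it cites Bierbrauer and observes that the statement follows from the Delsarte-type commutativity $(B(-I)\cap \F_{q}^n )^\perp= B(-I)^\perp\cap \F_{q}^n$ (equation (4.1) in the text), combined with the easy orthogonality computation $B(-I)^\perp=B(J)$ coming from $\sum_{i=0}^{n-1}\beta^{i(a+b)}=0$ unless $a+b\equiv 0 \pmod n$. You instead re-derive the identification $C(I)=C_g$ via the discrete Fourier transform (which the paper also only cites) and then invoke the classical reciprocal-check-polynomial description of the dual of a cyclic code, tracking how negation acts on the defining and generating sets; your observation that $-J$ is again a cyclotomic coset is the right bookkeeping to make $C(-J)$ well defined. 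Your argument is self-contained and entirely elementary, relying only on textbook cyclic-code theory; the paper's route is less self-contained but has the advantage that the commutativity fact (4.1) is exactly the tool reused later (e.g., in the proof of Theorem~\ref{T:csstcyclic}) and extends to subfield subcodes of more general evaluation codes, including the extended cyclic codes of Section~4.1 where the reciprocal-polynomial argument no longer applies verbatim. Both proofs are valid; no gap.
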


This last result can be seen as a consequence of the following fact from \cite{bierbrauercyclic}: If $I$ is a cyclotomic coset, then
\begin{equation}\label{commutativity}
(B(-I)\cap \F_{q}^n )^\perp= (B(-I)^\perp)\cap \F_{q}^n.
\end{equation}

The length of $C(I)$ is $n$, and its dimension is $\abs{I}$. For the minimum distance, we need the following definition.

\begin{definition}
The {\it amplitude} of a nonempty subset $I\subset \zn$ is
$$
\Amp(I):=\min \{i\in \N : \exists c\in  \zn \text{ such that } I\subset\{c,c+1,\dots,c+i-1 \} \}.
$$
\end{definition}

Then, the minimum distance of $C(I)$ is greater than or equal to $n-\Amp(I)+1$; for example, see \cite{cascudosquarescyclic}. Summarizing, $C(I)$ has parameters 
$$
[n,\,\abs{I},\,\geq n-\Amp(I)+1].
$$
Since $\Amp(-J)=\Amp(J)$, we see that $C(I)^\perp$ has parameters $[n,\,\abs{J},\,\geq n-\Amp(J)+1]$. Note that $n-\Amp(J)+1$ is equal to the usual BCH bound, i.e., it is equal to $\delta(I)+1$, where $\delta(I)$ is the maximum number of consecutive elements in $I$.

Given $I_1,I_2\subset \zn$, we consider their Minkowski sum
\begin{equation}\label{sumcyclo}
I_1+I_2:=\{ i_1+i_2 : i_1\in I_1,\; i_2\in I_2 \}\subset \zn.
\end{equation}

It is easy to check that if $I_1, I_2\subset \zn$ are cyclotomic cosets, then $I_1+I_2$ is also a cyclotomic coset. Following the previous notation, we will denote $J_i=[n]\setminus I_i$, for $i=1,2$.

\begin{example}\label{EX:cyc2}
Continuing with Example \ref{EX:cyc1}, we consider 
$$
I_1=\{1,2,4,8,15\},\; I_2=\{1,2,4,8\}.
$$
We compute the following Minkowski sums, which we will use in the following examples:
$$
I_1+I_2=\{1,2,3,4,5,6,8,9,10,12\}, \; I_1+I_1=(I_1+I_2)\cup \{15\}.
$$
Note that $I_1+I_2=\II_1\cup \II_3\cup\II_5$, i.e., $I_1+I_2$ is also a cyclotomic coset.
\end{example}

The following result from \cite{cascudosquaresmpc} shows that the sum and the Schur product of cyclic codes is also a cyclic code. 

\begin{lemma}\label{L:sumschur}
Let $I_1$ and $I_2$ be cyclotomic cosets. Then
$$
\begin{aligned}
&C(I_1)+C(I_2)=C(I_1 \cup I_2),\\
&C(I_1)\star C(I_2)= C(I_1+I_2).
\end{aligned}
$$
\end{lemma}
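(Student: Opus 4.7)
My plan is to handle the two equalities separately, using generator-polynomial manipulations for the sum and a Galois-descent/dimension-count argument for the Schur product.

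For $C(I_1)+C(I_2)=C(I_1\cup I_2)$, I would work with generator polynomials. If $g_i=\prod_{j\in J_i}(x-\beta^j)$ is the generator of $C(I_i)$ in $\F_q[x]/(x^n-1)$, then the sum of the cyclic ideals $(g_1)+(g_2)$ equals the principal ideal $(\gcd(g_1,g_2))$, and $\gcd(g_1,g_2)=\prod_{j\in J_1\cap J_2}(x-\beta^j)$ generates exactly the cyclic code with defining set $J_1\cap J_2$, i.e.\ with generating set $[n]\setminus(J_1\cap J_2)=I_1\cup I_2$.

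For $C(I_1)\star C(I_2)=C(I_1+I_2)$, the inclusion $\subseteq$ follows from the monomial identity $\ev_{X_n}(x^{-i_1})\star\ev_{X_n}(x^{-i_2})=\ev_{X_n}(x^{-(i_1+i_2)})$, which gives $B(-I_1)\star B(-I_2)=B(-(I_1+I_2))$; then any $c_1\star c_2$ with $c_i\in C(I_i)\subset B(-I_i)$ lies in $B(-(I_1+I_2))\cap\F_q^n=C(I_1+I_2)$, using that $I_1+I_2$ is itself a cyclotomic coset. For the reverse inclusion I would compare $\F_q$-dimensions after base change to $\F_{q^s}$. The key descent identity is $C(I)\otimes_{\F_q}\F_{q^s}=B(-I)$ for every cyclotomic coset $I$: indeed $B(-I)$ is stable under the coordinatewise Frobenius $\sigma$ (since $-I$ is closed under multiplication by $q$), and a Galois-stable $\F_{q^s}$-subspace of $\F_{q^s}^n$ is always the base change of its $\F_q$-rational points. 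Combined with the elementary identity $(D_1\star D_2)\otimes\F_{q^s}=(D_1\otimes\F_{q^s})\star(D_2\otimes\F_{q^s})$, this yields
\[
(C(I_1)\star C(I_2))\otimes_{\F_q}\F_{q^s}=B(-I_1)\star B(-I_2)=B(-(I_1+I_2))=C(I_1+I_2)\otimes_{\F_q}\F_{q^s}.
\]
Since $\F_q$-dimension is preserved under base change, the two $\F_q$-subspaces $C(I_1)\star C(I_2)$ and $C(I_1+I_2)$ of $\F_q^n$ have equal dimension, and the inclusion from the previous step upgrades to equality.

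I expect the main obstacle to be making the descent identity itself explicit. A hands-on proof picks a trace-dual basis $\{\omega_l\},\{\omega_l^*\}$ of $\F_{q^s}/\F_q$ and writes every $v\in B(-I)$ as $v=\sum_l\omega_l\cdot\mathrm{Tr}(\omega_l^*\cdot v)$ with $\mathrm{Tr}$ applied coordinatewise; the key check is that each component $\mathrm{Tr}(\omega_l^*\cdot v)$ still lies in $B(-I)$, which uses precisely that $-I$ is Frobenius-stable so that the Galois-averaged evaluation stays inside $\mathcal{L}(-I)$. An alternative, more combinatorial route avoiding descent would construct an $\F_q$-basis of $C(I)$ from traces $\mathrm{Tr}_{\F_{q^{s_a}}/\F_q}(\alpha x^{-a})$ indexed by minimal cyclotomic cosets $\II_a\subset I$ and basis elements $\alpha\in\F_{q^{s_a}}$, and then expand each generator of $C(I_1+I_2)$ as an $\F_q$-linear combination of Schur products of such trace vectors, but the bookkeeping is heavier.
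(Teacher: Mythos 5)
Your proof is correct. Note that the paper does not prove this lemma at all: it imports it from the cited works of Cascudo et al., and the argument there is essentially the one you give --- the sum via generator polynomials (ideals of $\F_q[x]/(x^n-1)$ and $\gcd$, legitimate because $x^n-1$ is separable as $n\mid q^s-1$), and the Schur product via the monomial identity $B(-I_1)\star B(-I_2)=B(-(I_1+I_2))$ combined with Galois descent for the Frobenius-stable codes $B(-I)$, which is the same Delsarte-type fact underlying the duality identity for subfield subcodes that the paper quotes from Bierbrauer. So there is nothing to flag beyond the observation that your dimension count is not even needed: once you know $(C(I_1)\star C(I_2))\otimes_{\F_q}\F_{q^s}=C(I_1+I_2)\otimes_{\F_q}\F_{q^s}$, intersecting both sides with $\F_q^n$ already gives the equality of the two $\F_q$-subspaces.
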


As an application of \rt{equiv-def}, we obtain the following criterion for a pair of cyclic codes to be a CSS-T pair.

\begin{theorem}\label{T:csstcyclic}
Let $I_1, I_2\subset \zn$ be cyclotomic cosets. Then $(C(I_1),C(I_2))$ is a CSS-T pair if and only if:
\begin{enumerate}
    \item[\rm (1)] $I_2\subset I_1$ and
    \item[\rm (2)] $n\not \in (I_1+I_1+I_2)$.
\end{enumerate}
The parameters of the corresponding quantum code are $[[n,\abs{I_1}-\abs{I_2},\geq n-\Amp(J_2)+1]]$.
\end{theorem}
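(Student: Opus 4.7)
The plan is to translate each containment in Theorem~\ref{T:equiv-def}~(3), namely $C_2\subset C_1\cap (C_1^{\star 2})^\perp$, into a condition on the cyclotomic cosets using Lemma~\ref{L:sumschur} and Theorem~\ref{T:dualcyclic}. For cyclotomic cosets $I,I'\subset \zn$ the description $C(I)=B(-I)\cap\F_q^n$ together with the uniqueness of the defining set of a cyclic code gives the correspondence
$$
C(I')\subset C(I) \iff I'\subset I,
$$
which handles the containment $C(I_2)\subset C(I_1)$ and yields condition (1).

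For the other containment, Lemma~\ref{L:sumschur} gives $C(I_1)^{\star 2}=C(I_1+I_1)$, and Theorem~\ref{T:dualcyclic} applied to the cyclotomic coset $I_1+I_1$ yields
$$
(C(I_1)^{\star 2})^\perp = C\bigl(-([n]\setminus(I_1+I_1))\bigr).
$$
Using the correspondence above, $C(I_2)\subset (C(I_1)^{\star 2})^\perp$ holds if and only if $I_2\subset -([n]\setminus(I_1+I_1))$, i.e.\ $-I_2\cap (I_1+I_1)=\emptyset$. Unwinding this in $\zn$ (where $n$ represents $0$) says that there is no pair $a\in I_1+I_1$, $b\in I_2$ with $a+b\equiv 0\pmod n$, which is condition (2): $n\notin I_1+I_1+I_2$. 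Combining both containments with Theorem~\ref{T:equiv-def}~(3) establishes the equivalence.

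Finally, for the parameters: since $C(I_2)\subset C(I_1)$ with $\dim C(I_j)=|I_j|$, the CSS construction (Theorem~\ref{T:CSS}) yields dimension $|I_1|-|I_2|$. For the minimum distance, Corollary~\ref{C:dmin2perp} gives the lower bound $\wt(C(I_2)^\perp)$; by Theorem~\ref{T:dualcyclic}, $C(I_2)^\perp=C(-J_2)$, and the BCH-type bound recalled before the theorem together with the invariance $\Amp(-J_2)=\Amp(J_2)$ give $\wt(C(-J_2))\geq n-\Amp(J_2)+1$. The main care needed is keeping the sign conventions straight (the $-I$ in the definition of $C(I)$ and in Theorem~\ref{T:dualcyclic}) and handling the convention that $n$ represents $0\in\zn$, which is exactly what makes the condition "$n\notin I_1+I_1+I_2$" come out cleanly from the disjointness $-I_2\cap(I_1+I_1)=\emptyset$.
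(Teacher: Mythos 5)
Your proof is correct and follows essentially the same route as the paper: both reduce the statement to condition (3) of Theorem~\ref{T:equiv-def} and use Lemma~\ref{L:sumschur}, Theorem~\ref{T:dualcyclic}, and Corollary~\ref{C:dmin2perp} for the parameters. The only (minor) difference is in unwinding $C(I_2)\subset(C(I_1)^{\star 2})^\perp$: the paper rewrites this as $\1\in C(I_1+I_1+I_2)^\perp$ and invokes the criterion for $\1\in B(-(I_1+I_1+I_2))^\perp$, whereas you dualize $C(I_1+I_1)$ directly and translate the containment into the disjointness $-I_2\cap(I_1+I_1)=\emptyset$; both yield $n\notin I_1+I_1+I_2$.
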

\begin{proof}
We use the third equivalent condition from Theorem \ref{T:equiv-def} with $C_1=C(I_1)$ and $C_2=C(I_2)$. We have
$$
C(I_2)\subset C(I_1)\iff I_2\subset I_1,
$$
and 
$$
\begin{aligned}
C(I_2)\subset (C(I_1)^{\star 2})^\perp &\iff \1 \in (C(I_1)^{\star 2}\star C(I_2))^\perp=C(I_1+I_1+I_2)^\perp \\
&\iff \1 \in B(-(I_1+I_1+I_2))^\perp \iff n\not \in I_1+I_1+I_2,
\end{aligned}
$$
as follows from (\ref{commutativity}) and \rl{sumschur}. Also, the last equivalence follows from \cite[Prop. 1]{galindostabilizer}. 
We use Corollary \ref{C:dmin2perp} for the parameters of the quantum code.
\end{proof}

\begin{remark}\label{R:condcyclic}
Theorem \ref{T:csstcyclic} also holds if we substitute condition (2) with
\begin{enumerate}
    \item[(2')] $I_1+I_1\subset -J_2$.
\end{enumerate}
This is because
$$
C(I_2)\subset (C(I_1)^{\star 2})^\perp =C(I_1+I_1)^\perp
\iff I_1+I_1\subset -J_2.
$$
As $I_2\subset I_1$, from Theorem \ref{T:csstcyclic}, we obtain the necessary condition $n\not \in I_2$ for $(C(I_1),C(I_2))$ to be a CSS-T pair. This happens if and only if $n\in -J_2$. Hence, if the pair $I_1, I_2$ satisfies the conditions from Theorem \ref{T:csstcyclic}, then the pair $I_1\cup \{n\}, I_2$ also satisfies those conditions. This is a translation of the following fact that we have seen in the previous section: If $(C_1, C_2)$ is a CSS-T pair, then $(C_1+\la\1\ra, C_2)$ is also a CSS-T pair.
\end{remark}

\begin{example}\label{EX:cyc3}
We consider $I_1,I_2$ as in Example \ref{EX:cyc2}. Clearly $I_2\subset I_1$. From the computation of $I_1+I_2$ in Example \ref{EX:cyc2}, we obtain
$$
I_1+I_1+I_2=[n-1]=\{1,2,\dots,14\}.
$$
By Theorem \ref{T:csstcyclic}, we have that $(C(I_1),C(I_2))$ is a CSS-T pair with parameters $[[15,1,3]]$. Note that we have recovered the (punctured) quantum Reed-Muller code mentioned in the introduction.
\end{example}

In Section~\ref{S:poset}, we studied conditions for a CSS-T pair to be maximal in each component. The following result shows how we can translate those conditions to cyclic codes. 

\begin{corollary}\label{C:cycmaximal}
Let $I_1,I_2\subset \zn$ be cyclotomic cosets such that $(C(I_1),C(I_2))$ is a CSS-T pair. Then the pair $(C(I_1),C(I_2))$ is maximal in $C_1$ if and only if 
$$
-J_1=I_2\cup (I_1+I_2),
$$
is maximal in $C_2$ if and only if
$$
-J_2=(-J_1)\cup(I_1+I_1),
$$
and is maximal if and only if $$-J_1=I_1+I_2 \text{~and~} -J_2=I_1+I_1.$$
\end{corollary}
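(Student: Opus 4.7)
The strategy is to translate each of the three maximality characterizations from Section~\ref{S:poset} (namely \rp{max-in-C2}, \rc{max-in-C1}, and \rt{maximal}) into the language of generating sets, using the cyclic-code dictionary already established: Theorem~\ref{T:dualcyclic} for duals and Lemma~\ref{L:sumschur} for sums and Schur products. The one extra ingredient I need is that the intersection of two cyclic codes is cyclic, with generating set equal to the intersection of the generating sets: if $A,B\subset\zn$ are cyclotomic cosets, then $C(A)\cap C(B)=C(A\cap B)$. This follows immediately from the defining-set description, since a codeword lies in $C(A)\cap C(B)$ iff it is annihilated at $\beta^{j}$ for every $j\in J_A\cup J_B=[n]\setminus(A\cap B)$.

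For the characterization of maximality in $C_1$, Corollary~\ref{C:max-in-C1} says it amounts to $C(I_1)=C(I_2)^{\perp}\cap(C(I_1)\star C(I_2))^{\perp}=C(-J_2)\cap C(-K_{12})$, where $K_{12}:=[n]\setminus(I_1+I_2)$. The intersection formula above, together with the fact that negation commutes with intersection in $\zn$, gives $I_1=-(J_2\cap K_{12})$. Since $-$ is a bijection of $\zn$, taking complements on both sides yields
$$-J_1=([n]\setminus J_2)\cup([n]\setminus K_{12})=I_2\cup(I_1+I_2),$$
which is the claimed condition. The argument for maximality in $C_2$ is symmetric: by \rp{max-in-C2} it reads $C(I_2)=C(I_1)\cap C(I_1+I_1)^{\perp}=C(I_1)\cap C(-L_{11})$ with $L_{11}:=[n]\setminus(I_1+I_1)$, hence $I_2=I_1\cap(-L_{11})$, and taking complements gives $J_2=J_1\cup(-(I_1+I_1))$, i.e.\ $-J_2=(-J_1)\cup(I_1+I_1)$.

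For the full maximality, I would not chain the two previous equivalences (one would have to verify, e.g., that $I_2\subset I_1+I_2$, which requires $n\in I_1$, etc.); instead I would appeal directly to \rt{maximal}, which says $(C(I_1),C(I_2))$ is maximal iff $C(I_1)^{\perp}=C(I_1)\star C(I_2)$ and $C(I_2)^{\perp}=C(I_1)^{\star 2}$. By Theorem~\ref{T:dualcyclic} and Lemma~\ref{L:sumschur}, these are $C(-J_1)=C(I_1+I_2)$ and $C(-J_2)=C(I_1+I_1)$, which at the level of generating sets is exactly $-J_1=I_1+I_2$ and $-J_2=I_1+I_1$. I do not expect any real obstacle here: the whole proof is a mechanical translation once the intersection-of-cyclic-codes identity is recorded, and the only subtle point is careful bookkeeping with the involution $-$ and set complement in $\zn$, using that both are bijections of $[n]$.
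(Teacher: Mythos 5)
Your proposal is correct and follows the same route as the paper: translate Proposition~\ref{P:max-in-C2}, Corollary~\ref{C:max-in-C1}, and Theorem~\ref{T:maximal} into cyclotomic-coset language via Theorem~\ref{T:dualcyclic} and Lemma~\ref{L:sumschur}, then take complements and apply the negation involution. The only difference is that you make explicit the identity $C(A)\cap C(B)=C(A\cap B)$, which the paper's one-line proof leaves implicit; your justification of it via defining sets is sound.
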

\begin{proof}
The conditions for maximality in $C_1$ and $C_2$ follow from Corollary \ref{C:max-in-C1} and Proposition \ref{P:max-in-C2}, respectively, taking into account Theorem \ref{T:dualcyclic} and Lemma \ref{L:sumschur}. The condition for maximality follows similarly from Theorem \ref{T:maximal}. 
\end{proof}

\begin{example}\label{EX:cyc4}
Continuing with the setting from Example \ref{EX:cyc3}, it is easy to check, using Example \ref{EX:cyc2}, that $-J_1=I_1+I_2$ and $-J_2=I_1+I_1$. Therefore, by Corollary \ref{C:cycmaximal}, the CSS-T pair $(C(I_1),C(I_2))$ is maximal.
\end{example}

From Corollary~\ref{C:dmin2perp}, we see that it is desirable to find CSS-T pairs $(C_1, C_2)$ such that $C_1^{\star 2}$ has a large minimum distance. In \cite{cascudosquarescyclic}, it is shown that the construction of cyclic codes based on the notion of restricted weight can give rise to codes $C$ such that both $C$ and $C^{\star 2}$ have excellent parameters. It is, therefore, interesting to study when we can use these codes for constructing CSS-T pairs. We briefly explain the construction from \cite{cascudosquarescyclic} and then obtain CSS-T codes from this construction. In what follows, we assume that $n=q^s-1$.

\begin{definition}
Let $a\in [n]$ have $q$-ary representation $(a_{s-1},a_{s-2},\dots,a_{0})_q$, and let $1\leq t\leq s$. The $t$-restricted weight of $a$ is defined as
$$
w_q^{(t)}(a):=\max_{i\in \{0,\dots,s-1\}}\sum_{j=0}^{t-1}a_{i+j},
$$
where we consider the sum $i+j$ modulo $s$. In other words, it is the maximum number of nonzero elements for any sequence of $t$ (cyclically) consecutive digits of the $q$-ary representation of $a$.
\end{definition}

The $t$-restricted weight is invariant under multiplication by $q$, and we can speak about the $t$-restricted weight of a minimal cyclotomic coset. It is shown in \cite[Prop. 11]{cascudosquarescyclic} that
$$
w_q^{(t)}(a)\leq w_q^{(t)}(b)+ w_q^{(t)}(c),
$$
for $b,c\in [n]$ and $a=b+c\bmod n$. Therefore, given cyclotomic cosets $I_1,I_2\subset \zn$ whose elements have $t$-restricted weight at most $\mu_1,\mu_2$, respectively, the cyclotomic coset $I_1+I_2$ will have $t$-restricted weight at most $\mu_1+\mu_2$. Let $I^t_{\leq \mu}:=\{a\in \zn : w_q^{(t)}(a)\leq \mu \}$. In \cite[Prop. 13]{cascudosquarescyclic}, it is proven that for $a\in I^t_{\leq \mu}$, we have $w_q^{(s)}(a)\leq \lfloor (\mu s)/t\rfloor$. This motivates the following construction.

\begin{corollary}\label{C:restricted}
Take $1\leq t\leq s$ and $1\leq \mu_1,\mu_2\leq t$. If $\mu_2\leq \mu_1$ and $2\lfloor (\mu_1 s)/t\rfloor+\lfloor (\mu_2 s)/t\rfloor\leq s-1$, then $(C(I^t_{\leq \mu_1}),C(I^t_{\leq \mu_2}))$ is a CSS-T pair.
\end{corollary}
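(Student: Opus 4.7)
The plan is to apply Theorem~\ref{T:csstcyclic} directly. I need to verify its two conditions: $I^t_{\leq \mu_2}\subset I^t_{\leq \mu_1}$ and $n\notin I^t_{\leq \mu_1}+I^t_{\leq \mu_1}+I^t_{\leq \mu_2}$ (working in $\zn$). Since we are restricted to binary codes for CSS-T, we have $q=2$ and $n=2^s-1$, whose binary representation is $(1,1,\dots,1)$ of length $s$; in particular $w_2^{(s)}(n)=s$, and $n$ is the \emph{only} element of $\zn$ with this property.

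The first condition is essentially immediate: if $\mu_2\leq \mu_1$, then every $a\in\zn$ with $w_q^{(t)}(a)\leq \mu_2$ also satisfies $w_q^{(t)}(a)\leq \mu_1$, so $I^t_{\leq \mu_2}\subset I^t_{\leq \mu_1}$.

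For the second condition, I would combine the two facts recalled from \cite{cascudosquarescyclic} just before the statement: subadditivity of the $t$-restricted weight under addition modulo $n$, and the bound $w_q^{(s)}(a)\leq \lfloor \mu s/t\rfloor$ whenever $w_q^{(t)}(a)\leq \mu$. Concretely, take any $a_1,a_2\in I^t_{\leq \mu_1}$ and $b\in I^t_{\leq \mu_2}$, and set $c=a_1+a_2+b\bmod n$. Applying subadditivity of $w_q^{(s)}$ (the cited inequality with $t=s$) and the restricted-weight bound to each summand yields
\[
w_q^{(s)}(c)\;\leq\; w_q^{(s)}(a_1)+w_q^{(s)}(a_2)+w_q^{(s)}(b)\;\leq\; 2\lfloor \mu_1 s/t\rfloor+\lfloor \mu_2 s/t\rfloor\;\leq\; s-1,
\]
by hypothesis. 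Since $w_q^{(s)}(n)=s$ and $n$ is the unique element with maximal $s$-restricted weight, we conclude $c\neq n$ in $\zn$, i.e., $n\notin I^t_{\leq \mu_1}+I^t_{\leq \mu_1}+I^t_{\leq \mu_2}$. Theorem~\ref{T:csstcyclic} then delivers the conclusion.

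I do not expect any serious obstacle here: the work has already been done in \cite{cascudosquarescyclic} and in Theorem~\ref{T:csstcyclic}, and the corollary is just a bookkeeping step assembling them. The only minor subtlety is making sure one applies subadditivity for $s$-restricted weight (rather than $t$-restricted weight, which would give the weaker bound $2\mu_1+\mu_2$), which is why the hypothesis is phrased in terms of $\lfloor \mu_i s/t\rfloor$ rather than $\mu_i$ themselves.
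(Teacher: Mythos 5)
Your proposal is correct and matches the paper's own proof essentially verbatim: both verify the two hypotheses of Theorem~\ref{T:csstcyclic} by noting that $\mu_2\leq\mu_1$ gives the containment $I^t_{\leq\mu_2}\subset I^t_{\leq\mu_1}$, and then combine subadditivity of the $s$-restricted weight with the bound $w_2^{(s)}(a)\leq\lfloor \mu s/t\rfloor$ to show every element of $I_1+I_1+I_2$ has $s$-restricted weight at most $s-1<s=w_2^{(s)}(n)$. Your closing remark about why one must use the $s$-restricted (rather than $t$-restricted) weight for the subadditivity step is a nice observation, though not needed beyond what the paper states.
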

\begin{proof}
We use Theorem \ref{T:csstcyclic} with $I_i=I^t_{\leq \mu_i}$, for $i=1,2$. As $\mu_2\leq \mu_1$, we have $I_2\subset I_1$. We claim that $n\not \in I_1+I_1+I_2$. Indeed, let $z=a+b+c \bmod n$, with $a,b\in I_1$, $c\in I_2$. By the previous discussion, 
$$
w_2^{(s)}(z)=w_2^{(s)}(a+b+c)\leq w_2^{(s)}(a)+w_2^{(s)}(b)+w_2^{(s)}(c)\leq 2\lfloor (\mu_1 s)/t\rfloor+\lfloor (\mu_2 s)/t\rfloor\leq s-1. 
$$
Since $w_2^{(s)}(n)=s$, we conclude that $n\not \in I_1+I_1+I_2$, and the result follows from Theorem \ref{T:csstcyclic}.
\end{proof}

Note that, by Remark~\ref{R:condcyclic}, we can also consider $C_1=C(I^t_{\leq \mu_1}\cup \{n\})$ for the previous result. For the parameters of the corresponding CSS-T code, in \cite{cascudosquarescyclic}, there are formulas for the parameters of $C(I^t_{\leq \mu})$ in some cases, and we can also use the usual bounds for cyclic codes.

\begin{example}\label{EX:cyc5}
It is easy to check that $I_1$ and $I_2$ from Example \ref{EX:cyc2} are precisely
$$
I_1=I^4_{\leq \mu_1}\cup \{15\} \qquad \text{and} \qquad I_2=I^4_{\leq \mu_2}
$$
with $\mu_1=\mu_2=1$. Note that, for $t=s=4$, the conditions from Corollary \ref{C:restricted} are satisfied. Therefore, $(C(I^4_{\leq \mu_1}), C(I^4_{\leq \mu_2}))$ is a CSS-T pair, which implies that $(C(I_1), C(I_2))$ is a CSS-T pair (which we already knew by Example \ref{EX:cyc3}).
\end{example}

\subsection{Extended cyclic codes}
We define $\hzn:=\{0\}\cup \zn$. We will adapt the definitions from the previous section for this setting. Let $I\subset \hzn$. We say that $I$ is a cyclotomic coset if $I=q\cdot I$. For $I_1, I_2\subset \hzn$, we define $I_1+I_2$ as in (\ref{sumcyclo}), where we understand that $i_1+i_2=0$ if and only if $i_1=i_2=0$, for $i_1\in I_1$ and $i_2\in I_2$, and the rest of the sums are computed as usual in $\zn=\{1,\dots,n\}$. We denote by $J:=\hzn\setminus I$. 

For $\mathcal{M}\subset \{0,\dots,n\}$, we consider $\hat{X}_n:=\{0\}\cup X_n$, the zero locus of $x^{n+1}-x$, and we define 
$$
\hat{B}(\mathcal{M}):=\ev_{\hat{X}_n}(\mathcal{L}(\mathcal{M}))=
\{(f(0),f(1),f(\beta),\dots,f(\beta^{n-1})) : f\in  \mathcal{L}(\mathcal{M})\}\subset \mathbb{F}_{q^s}^{n+1}.
$$
For $I\subset\hzn$ a cyclotomic coset, the extended cyclic code associated with $I$ is
$$
\hat{C}(I):=\hat{B}(I)\cap \mathbb{F}_{q}^{n+1}.
$$
Note that in this case, we are not considering $-I$. With respect to the parameters, $\hat{C}(I)$ has parameters $[n+1,\abs{I},\geq n-\max(I)+1]$, and $\hat{C}(I)^\perp$ has parameters $[n+1,n+1-\abs{I},\geq \delta(I)+1]$, where $\delta(I)$ is the maximum number of consecutive elements in $I$ as before (it is a BCH-type bound for extended cyclic codes).

Although these codes are no longer cyclic, they still preserve some of the properties of cyclic codes. The proof of the following result is analogous to the one in \cite[Thm. 1]{cascudosquarescyclic}.

\begin{lemma}
Let $I_1, I_2\subset \hzn$ be cyclotomic cosets. Then
$$
\begin{aligned}
&\hat{C}(I_1)\star \hat{C}(I_2) =\hat{C}(I_1+I_2).
\end{aligned}
$$
\end{lemma}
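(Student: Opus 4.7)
The plan is to mirror the proof of \cite[Thm.~1]{cascudosquarescyclic}, adjusted to accommodate the additional evaluation point $x=0$. The proof naturally splits into two steps: first establish the identity over the extension field $\F_{q^s}$, then pass to subfield subcodes.

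\textbf{Step 1: Identity over the extension field.} First I would establish
$$\hat{B}(I_1) \star \hat{B}(I_2) = \hat{B}(I_1 + I_2)$$
as codes over $\F_{q^s}$. The key point is the identity, for $a,b \in \hzn$,
$$\ev_{\hat{X}_n}(x^a) \star \ev_{\hat{X}_n}(x^b) = \ev_{\hat{X}_n}(x^{a+b}),$$
where the exponent $a+b$ is taken in $\hzn$ as defined in the paper. At the points of $X_n$ this reduces to $\beta^{j(a+b)}$ with exponent reducible modulo $n$; at the new point $x=0$, one has $0^a\cdot 0^b = 1$ exactly when $a=b=0$ and $0$ otherwise, matching the convention that $a+b=0$ in $\hzn$ only when $a=b=0$. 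Both inclusions then follow by applying this identity to basis monomials of $\mathcal{L}(I_1)$ and $\mathcal{L}(I_2)$.

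\textbf{Step 2: Subfield subcodes.} The inclusion $\hat{C}(I_1)\star \hat{C}(I_2) \subset \hat{C}(I_1+I_2)$ is immediate: the Schur product of two $\F_q$-valued codewords is $\F_q$-valued, and lies in $\hat{B}(I_1)\star\hat{B}(I_2) = \hat{B}(I_1+I_2)$ by Step 1. For the reverse containment, I would argue as in the cyclic case: decompose $I_1+I_2$ into minimal cyclotomic cosets $\II_a$, and for each such $a$ pick a decomposition $a=b+c$ with $b\in I_1$, $c\in I_2$ (using the addition in $\hzn$). Then the $\F_q$-linear generators of $\hat{C}(\II_a)$, realized as traces $\mathrm{Tr}_{\F_{q^s}/\F_q}(\gamma\cdot \ev_{\hat{X}_n}(x^a))$ for $\gamma$ ranging over a basis of the appropriate subfield, can be expressed by $\F_q$-linearity of the trace as $\F_q$-linear combinations of Schur products of corresponding trace-generators of $\hat{C}(\II_b)$ and $\hat{C}(\II_c)$. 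At the extra coordinate $x=0$, the trace acts trivially on $0$, so this bookkeeping passes through unchanged.

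\textbf{Main obstacle.} The principal care is needed in Step 2, where one must verify that the trace/bilinear decomposition used in \cite[Thm.~1]{cascudosquarescyclic} to realize every generator of $C(I_1+I_2)$ as a Schur product of generators in $C(I_1)$ and $C(I_2)$ remains valid after adjoining the point $x=0$. This reduces to checking compatibility of the $\hzn$ addition rule (in particular $i_1+i_2=0$ iff $i_1=i_2=0$) with evaluation at $x=0$, which is already the content of Step 1; the remaining trace arguments are formally identical to the cyclic case.
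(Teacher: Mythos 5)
Your proposal takes essentially the same route as the paper: the paper's entire proof of this lemma is the remark that the argument is analogous to Cascudo's Theorem~1 on Schur products of cyclic codes, and your two-step adaptation---first proving $\hat{B}(I_1)\star\hat{B}(I_2)=\hat{B}(I_1+I_2)$ via the monomial identity $\ev_{\hat{X}_n}(x^a)\star\ev_{\hat{X}_n}(x^b)=\ev_{\hat{X}_n}(x^{a+b})$, then transferring the subfield-subcode/trace argument---is precisely that analogy made explicit. The only genuinely new content is the compatibility of the $\hzn$ addition convention ($i_1+i_2=0$ iff $i_1=i_2=0$) with evaluation at the added point $x=0$, which you check correctly.
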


As a consequence, one can check that Theorem \ref{T:csstcyclic} and Corollary \ref{C:restricted} also hold when we consider extended cyclic codes. Moreover, for extended cyclic codes, one may also allow $\mu_1=0$ or $\mu_2=0$ in Corollary \ref{C:restricted}. When considering the $s$-restricted weight, in \cite[Prop. 10]{cascudosquarescyclic}, it is shown that Corollary \ref{C:restricted} for extended cyclic codes corresponds to the family of CSS-T pairs obtained by using binary Reed-Muller codes from \cite{felicecsst}. Nevertheless, by considering the $t$-restricted weight, with $t<s$, we obtain different families of CSS-T codes. Moreover, considering the general case from Theorem \ref{T:csstcyclic}, it is clear that we obtain a much larger family of CSS-T pairs than by using binary Reed-Muller codes, thus obtaining a wider range of parameters. In the following example, we show that we can improve the parameters of the CSS-T codes obtained with binary Reed-Muller codes in some cases. All the computations from the following examples were done using SageMath \cite{sagemath}. 

\begin{example}\label{E:greedy}
We use a greedy construction to obtain CSS-T codes with cyclic codes, and we compare them with the CSS-T codes obtained with binary Reed-Muller codes. Let $s>1$, $n=2^s-1$, and we consider the cyclotomic cosets associated with the extension $\F_{2^s}/\F_2$. Assume that $\zn=\II_{a_1}\cup \II_{a_2}\cup \cdots \cup \II_{a_\ell}$, with $1=a_1<a_2<\cdots a_\ell$. We consider the following greedy construction: let $I_2 := \II_{a_1}\cup \II_{a_2}\cup \cdots \cup \II_{a_t}$, for some $t< \ell$ such that $n\not \in I_2+I_2+I_2$, and let $I_1^{(0)} := I_2$. If $I'_1 := I_1^{(0)}\cup \II_{a_{t+1}}$ satisfies $n\not \in I'_1+I'_1+I_2$, we set $I_1^{(1)}:= I'_1$, and we set $I_1^{(1)} := I_1^{(0)}$ otherwise. Following this procedure until we cannot add any more minimal cyclotomic cosets, we will get a cyclotomic coset $I_1^{(u)}$, for some $t\leq u <\ell$, such that $n\not \in I_1^{(u)}+I_1^{(u)}+I_2$. Therefore, by Theorem \ref{T:csstcyclic} and Remark \ref{R:condcyclic}, we get that $(C(I_1^{(u)}\cup \{n\}),C(I_2))$ is a CSS-T pair. Moreover, we have the BCH bound
$$
\wt(C(I_2)^\perp)\geq n-\Amp(J_2)+1=\delta(I_2)+1=a_{t+1},
$$
which bounds the minimum distance of the corresponding quantum code by Corollary \ref{C:dmin2perp}. Note that this construction can be easily generalized to extended cyclic codes.

For $s\leq 6$, the CSS-T codes obtained with the previous construction do not improve the parameters of the CSS-T codes obtained with binary Reed-Muller codes. Nevertheless, for $s=7,8,9,10$, we show in Table \ref{tablecsst} that we can obtain a broader range of parameters using cyclic and extended cyclic codes, and some of these codes outperform the ones derived from binary Reed-Muller codes. For all the codes in Tables \ref{tablecsst} and \ref{tablecsst2} we have checked that the bound for the minimum distance is sharp. 

\begin{table}[ht]

\caption{Parameters of the CSS-T codes obtained with binary Reed-Muller, cyclic, and extended cyclic codes (using the greedy construction).} 
\centering
\begin{tabular}{||c|c||c|c||c|c||}
 \hline 
$s$ & Reed-Muller  \\
  \hline \hline
7 & $[[128,21,4]]$  \\

8 & $[[256,84,4]]$ \\

9 & $[[512,120,4]]$ \\
9 & $[[512,84,8]]$ \\

10 & $[[1024,375,4]]$ \\
10 & $[[1024,120,8]]$ \\
\hline
\end{tabular}
\begin{tabular}{||c|c||c|c||c|c||}
 \hline 
$s$ & Cyclic \\
  \hline \hline
7 & $[[127,29,3]]$ \\
7 & $[[127,15,5]]$ \\
7 & $[[127,8,7]]$ \\

8 & $[[255,85,3]]$ \\
8 & $[[255,39,5]]$ \\
8 & $[[255,21,7]]$ \\

9 & $[[511,148,3]]$ \\
9 & $[[511,112,5]]$ \\
9 & $[[511,103,7]]$ \\

10 & $[[1023,376,3]]$ \\
10 & $[[1023,213,5]]$ \\
10 & $[[1023,191,7]]$ \\
10 & $[[1023,161,9]]$ \\
10 & $[[1023,131,11]]$ \\
10 & $[[1023,116,13]]$ \\
10 & $[[1023,106,15]]$ \\
\hline
\end{tabular}
\begin{tabular}{||c|c||c|c||c|c||}
 \hline 
$s$ & Extended cyclic \\
  \hline \hline
7 & $[[128,28,4]]$ \\
7 & $[[128,14,6]]$ \\
7 & $[[128,7,8]]$ \\

8 & $[[256,84,4]]$ \\
8 & $[[256,36,6]]$ \\
8 & $[[256,20,8]]$ \\

9 & $[[512,147,4]]$ \\
9 & $[[512,111,6]]$ \\
9 & $[[512,102,8]]$ \\

10 & $[[1024,375,4]]$ \\
10 & $[[1024,210,6]]$ \\
10 & $[[1024,190,8]]$ \\
10 & $[[1024,160,10]]$ \\
10 & $[[1024,130,12]]$ \\
10 & $[[1024,115,14]]$ \\
10 & $[[1024,105,16]]$ \\
\hline
\end{tabular}
\label{tablecsst}
\end{table}

Using Remark 3.13 from \cite{albertocsst}, it is easy to see that, for $n$ even, if we consider $e_i$, $1\leq i\leq n$, the standard basis vectors in $\F_{2}^n$, and the code
$$
C=\langle e_{2i-1}+e_{2i},\; 1\leq i\leq n/2 \rangle,
$$
then $(C,\la\1\ra)$ is a CSS-T pair with parameters 
\begin{equation}\label{eq:csstdmin2}
[[n,n/2-1,2]].
\end{equation}
This code has better parameters than the CSS-T codes with minimum distance 2 derived from binary Reed-Muller, cyclic, or extended cyclic codes in the cases we have checked. Therefore, we have omitted the codes with minimum distance $2$ from Table \ref{tablecsst} and the ones with dimension 0. 

For a direct comparison, we can see that the CSS-T codes obtained from binary Reed-Muller codes with parameters $[[128,21,4]]$, $[[512,120,4]]$, $[[512,84,8]]$ and $[[1024,120,8]]$ are outperformed by the CSS-T codes derived from extended cyclic codes with parameters $[[128,28, 4]]$, $[[512,147, 4]]$, $[[512,102, 8]]$ and $[[1024,190, 8]]$, respectively.
\end{example}

\begin{example}
Not all the codes from the previous example are maximal with respect to $C_1$. Therefore, it is possible to use our Corollary \ref{C:propagation} to increase the dimension of the corresponding quantum code in some cases. For example, one can check that the CSS-T code with parameters $[[255,21, 7]]$ from Table \ref{tablecsst} is not maximal with respect to the first component using Corollary \ref{C:max-in-C1}. By Proposition \ref{P:step}, this means that there is some vector $y\in C_2^\perp\cap (C_1\star C_2)^\perp$ such that $y\not \in C_1$ and $(C_1+\langle y \rangle, C_2)$ is a CSS-T pair. The parameters of the corresponding quantum code are $[[255,22, 7]]$ by Corollary \ref{C:propagation}, increasing the dimension of the quantum code by 1. By computer search, we have found a vector $y$ such that $(C_1+\langle y \rangle, C_2)$ is still not maximal with respect to the first component. Hence, there is a vector $y'$ such that $(C_1+\langle y,y' \rangle, C_2)$ is a CSS-T pair with parameters $[[255,23, 7]]$, increasing the dimension of the original quantum code by 2. In the cases where we have found such $y,y'$, the pair $(C_1+\langle y,y' \rangle, C_2)$ is maximal with respect to the first component, and we cannot continue to increase the dimension using Corollary \ref{C:propagation}.

In Table \ref{tablecsst2}, we show the codes that can be derived from CSS-T codes using binary Reed-Muller codes, cyclic codes, and extended cyclic codes (with the greedy construction from Example \ref{E:greedy}) by applying Corollary \ref{C:propagation} for length $2^s$, $s=4,\dots,10$ ($2^s-1$ for cyclic codes). All the codes in Table \ref{tablecsst2} are maximal with respect to the first component of the CSS-T pair, although it might be possible to improve them further since there are many choices for the vectors that we add to $C_1$ in Corollary \ref{C:propagation}. We note that the CSS-T codes derived from cyclic and extended cyclic codes still outperform the improved CSS-T codes arising from Reed-Muller codes. The parity check matrices of the classical codes used to construct the quantum codes from Tables \ref{tablecsst} and \ref{tablecsst2} can be found in the GitHub repository {\tt RodrigoSanJose/Cyclic-CSS-T} \cite{githubCSST}. 

\begin{table}[ht]

\caption{Parameters of improved CSS-T codes obtained with binary Reed-Muller, cyclic, and extended cyclic codes (using the greedy construction).} 
\centering
\begin{tabular}{||c|c||c|c||c|c||}
 \hline 
$s$ & Reed-Muller \\
  \hline \hline
5 & $[[32,4,4]]$\\
7 & $[[128,26,4]]$ \\
9 & $[[512,133,4]]$ \\
10 & $[[1024,125,8]]$ \\
\hline
\end{tabular}
\begin{tabular}{||c|c||c|c||c|c||}
 \hline 
$s$ & Cyclic \\
  \hline \hline
5 & $[[31,4,3]]$ \\
8 & $[[255,23,7]]$ \\
9 & $[[511, 149, 3]]$ \\
10 & $[[1023,219,5]]$ \\
10 & $[[1023,193,7]]$ \\
10 & $[[1023,133,11]]$ \\
\hline
\end{tabular}
\begin{tabular}{||c|c||c|c||c|c||}
 \hline 
$s$ & Extended cyclic \\
  \hline \hline
5 & $[[32,4,4]]$ \\
8 & $[[256,22,8]]$ \\
9 & $[[512, 148, 4]]$ \\
10 & $[[1024,217,6]]$ \\
10 & $[[1024,192,8]]$ \\
10 & $[[1024,133,12]]$ \\ 
\hline
\end{tabular}
\label{tablecsst2}
\end{table}
\end{example}

\section{Relation to triorthogonal codes}\label{S:tri}

Another family of codes that is usually studied for fault-tolerant computation, and, in particular, for magic state distillation, are triorthogonal codes \cite{bravyiTriorthogonalOriginal,haahClassificationTriorthogonal}. A binary matrix $G$ of size $m\times n$ is called \textit{triorthogonal} if $\wt(G_a\star G_b)=0 \bmod 2$, for all pairs of rows $1\leq a < b \leq m$, and $\wt(G_a\star G_b\star G_c)=0\bmod 2$, for all triples of rows $1\leq a <b <c \leq m$. With such a matrix, by taking $C_1$ to be the linear span of $G$ and $C_2$ the linear span of the even weighted rows of $G$, one can construct a quantum code (which we will call \textit{triorthogonal code}) such that, when a transversal $T$ gate is applied to it, it induces a transversal $T$ gate on the logical qubits, up to Clifford corrections. This is stronger than having a CSS-T code, since the definition of CSS-T only requires the physical transversal $T$ to induce some logical operation on the logical qubits. If one wants to avoid the Clifford corrections, some weight conditions have to be imposed on the classical codes used (see \cite[Thm. 4]{rengaswamyOptimalityCSST}).
From our results, we can obtain the following. 

\begin{corollary}\label{C:1inC3}
    If $(C_1, C_2)$ is a CSS-T pair, then $\1 \in(C_2^{\star 3})^\perp$.    
\end{corollary}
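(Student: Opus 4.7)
The plan is to reduce the claim to showing $C_2\subset (C_2^{\star 2})^\perp$, and then recognize that condition as an equivalent reformulation of $\1 \in (C_2^{\star 3})^\perp$.

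First I would invoke \rc{monotone}~(1): applied with $C_1' = C_2$ (noting that trivially $C_2\subset C_2\subset C_1$), it guarantees that the pair $(C_2, C_2)$ is itself a CSS-T pair. Then \rt{equiv-def}~(3) applied to $(C_2,C_2)$ yields
$$C_2 \subset C_2 \cap (C_2^{\star 2})^\perp \subset (C_2^{\star 2})^\perp.$$

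The remaining step is purely a reformulation. The code $C_2^{\star 3}$ is spanned by vectors of the form $a\star b\star c$ with $a,b,c\in C_2$, and for any such triple the key identity
$$\1 \cdot (a\star b\star c) \;=\; \sum_{i=1}^n a_i b_i c_i \;=\; (a\star b)\cdot c$$
shows that $\1\in (C_2^{\star 3})^\perp$ if and only if $(a\star b)\cdot c=0$ for all $a,b,c\in C_2$, i.e.\ $C_2^{\star 2} \perp C_2$, i.e.\ $C_2\subset (C_2^{\star 2})^\perp$. Combined with the inclusion obtained in the previous step, this finishes the proof.

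There is no genuine obstacle: the only observation needed beyond routine manipulation is that the CSS-T property of $(C_1,C_2)$ descends to $(C_2,C_2)$, which is immediate from the monotonicity in the first argument established in \rc{monotone}. Everything else is the standard adjunction between Schur products and inner products, together with a direct application of \rt{equiv-def}~(3).
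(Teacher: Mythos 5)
Your proof is correct and follows essentially the same route as the paper: pass to the pair $(C_2,C_2)$ via Corollary~\ref{C:monotone}, apply Theorem~\ref{T:equiv-def} to get $C_2^{\star 2}\perp C_2$, and conclude by the Schur-product/inner-product adjunction. The only cosmetic difference is that you use part~(3) of Theorem~\ref{T:equiv-def} where the paper uses the dual statement $C_2^{\star 2}\subset C_2^\perp$, and you spell out the reformulation $\1\cdot(a\star b\star c)=(a\star b)\cdot c$ that the paper leaves implicit.
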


\begin{proof}
As $C_2\subseteq C_1$, Corollary~\ref{C:monotone} implies that $(C_2,C_2)$ is a CSS-T pair. Thus, $C_2^{\star 2}\subset C_2^\perp$ by Theorem~\ref{T:equiv-def}, meaning that $\1 \in(C_2^{\star 3})^\perp$.
\end{proof} 

Having $\1 \in(C_2^{\star 3})^\perp$ implies that $C_2$ has a triorthogonal generator matrix, which is also the case for triorthogonal codes due to the fact that, in that setting, the generator matrix for $C_2$ is a submatrix of a triorthogonal matrix.

Since the triorthogonality condition is stronger than being CSS-T, it may be possible that CSS-T codes achieve better parameters than triorthogonal codes. To see this, we consider the \textit{scaling exponent} of the distillation protocol presented in \cite{bravyiTriorthogonalOriginal}. They obtain that
$$
\gamma=\frac{\log_2(n/k)}{\log_2(d)},
$$
for an $[[n,k,d]]$ triorthogonal code. Since the distillation overhead scales as $O(\log^\gamma(1/\epsilon))$, where $\epsilon$ is the output accuracy (see \cite{bravyiTriorthogonalOriginal} for details), codes with lower $\gamma$ are preferred. We will use this value for CSS-T codes to compare the goodness of their parameters with some of the triorthogonal codes in the literature. In \cite{bravyiTriorthogonalOriginal}, the authors find a family of triorthogonal codes with parameters $[[3k+8,k,\geq 2]]$, where $k$ is even. The CSS-T codes from (\ref{eq:csstdmin2}) have strictly better parameters. In particular, the scaling exponent $\gamma$ tends to $1$ for the codes in (\ref{eq:csstdmin2}), while the family from \cite{bravyiTriorthogonalOriginal} has scaling exponent tending to $\log_2(3) \approx 1.585$. In \cite{bravyiTriorthogonalOriginal} they also obtain a code with parameters $[[49,1,5]]$, and $\gamma=2.418$. If we compare with the codes in our tables, in particular, the codes $[[32,4,4]]$ and $[[1024,192,8]]$ (to take an example of a short code and a long code), we obtain for $\gamma$ the values $1.5$ and $0.805$, respectively. 

In \cite{haahClassificationTriorthogonal}, the authors find triorthogonal codes with parameters $[[35,3,3]]$ and $[28,2,3]]$, with scaling exponent equal to $2.236$ and $2.402$, respectively, which are higher values than the one we obtained for $[[32,4,4]]$. Moreover, the authors in \cite{haahClassificationTriorthogonal} prove that there is no triorthogonal quantum code with minimum distance larger than $3$ when $n+k\leq 38$, while $[[32,4,4]]$ satisfies these last two conditions (but it is not triorthogonal, only CSS-T). Furthermore, in \cite{haahSublogarithmicOverhead}, triorthogonal codes with $\gamma < 1$ are found, but they require at least $\approx 2^{58}$ qubits. With CSS-T, codes it is possible to find codes with $\gamma<1$ and a much lower number of qubits, for example the code  $[[1024,192,8]]$ we showed before. The shorter CSS-T code that we find with $\gamma<1$ is the code with parameters $[[256,84,4]]$, which has $\gamma=0.804$. This shows that one can indeed obtain better parameters by relaxing the conditions on the classical codes and requiring them to be CSS-T instead of triorthogonal. We reiterate that this discussion is purely in terms of parameters, since triorthogonal codes implement the logical $T$ gate, while for CSS-T codes we only require that they support a transversal $T$ gate.

\section{Conclusion} \label{S:conclusion}

In this paper, we considered binary CSS-T codes, which are quantum stabilizer codes that respect a transversal gate. We provided a straightforward characterization of binary CSS-T codes and used it to demonstrate that CSS-T codes form a poset. We determined maximal and minimal elements of this poset as well as elements which are maximal with respect to one code in a CSS-T pair. We demonstrated a propagation rule for nondegenerate CSS-T codes. We used cyclotomic cosets to characterize CSS-T pairs from cyclic codes. Moreover, we obtained quantum codes with better parameters than those
in the literature, using cyclic and extended cyclic codes. A number of related open problems remain, such as  determining a similar characterizations of $q$-ary CSS-T codes and considering other families of classical codes to construct CSS-T codes. 

\section{Acknowledgements}
Part of this work was done during the visit of Diego Ruano, Rodrigo San-Jos\'e, and Ivan Soprunov to Virginia Tech. They thank Eduardo Camps Moreno, Hiram H. L\'opez, and Gretchen L. Matthews for their hospitality. The initial collaboration amongst the group (absent San-Jos\'e) was facilitated by the Collaborate@ICERM program, supported by the National Science Foundation under Grant No. DMS-1929284.

\section*{Declarations}
\subsection*{Conflict of interest} The authors declare no conflict of interest.

\bibliographystyle{abbrv}

\end{document}